\documentclass{article}

\usepackage{microtype}
\usepackage{graphicx}
\usepackage{subcaption}
\usepackage{booktabs} 
\usepackage{enumitem}
\usepackage{float}   
\usepackage{placeins}  
\usepackage{hyperref}

\usepackage[accepted]{icml2026}

\usepackage{amsmath}
\usepackage{amssymb}
\usepackage{mathtools}
\usepackage{amsthm}
\usepackage{thmtools}
\usepackage[most]{tcolorbox}

\usepackage[capitalize,noabbrev]{cleveref}

\theoremstyle{plain}
\newtheorem{theorem}{Theorem}[section]

\newtheorem{lemma}[theorem]{Lemma}
\newtheorem{claim}[theorem]{Claim}
\newtheorem{corollary}[theorem]{Corollary}
\theoremstyle{definition}
\newtheorem{definition}[theorem]{Definition}

\theoremstyle{remark}

\usepackage[textsize=tiny]{todonotes}

\icmltitlerunning{Approximation Preserving Coresets}
\newcommand{\eps}{\varepsilon}

\DeclareMathOperator*{\var}{Var}

\newcommand{\costom}{\cost_\Omega} 

\newcommand{\csize}{m}

\newcommand{\cost}{\textnormal{cost}}
\newcommand{\dist}{\textnormal{dist}}

\newcommand{\poly}{\text{poly}}
\newcommand{\Var}{\var}
\DeclareMathOperator*{\E}{\mathbb E}
\newcommand{\R}{\mathbb{R}}

\newcommand{\SensitivitySample}{\textsc{SenSample}}
\newcommand{\ilog}[2]{\log^{(#1)}\!\left(#2\right)}

\newcommand{\apc}{\text{APC }}

\newcommand{\OPT}{\text{OPT}}

\usepackage{booktabs}

\begin{document}

\twocolumn[
  \icmltitle{Approximation Preserving Coresets}



  \icmlsetsymbol{equal}{*}

  \icmlsetsymbol{equal}{*}

  \begin{icmlauthorlist}
    \icmlauthor{Milind Prabhu}{a}
    \icmlauthor{Chris Schwiegelshohn}{b}
    \icmlauthor{Sudarshan Shyam}{b}
  \end{icmlauthorlist}

  \icmlaffiliation{a}{University of Michigan, USA}
  \icmlaffiliation{b}{Aarhus University, Denmark}

  \icmlcorrespondingauthor{Chris Schwiegelshohn}{schwiegelshohn@cs.au.dk}

  \icmlkeywords{Machine Learning, ICML}

  \vskip 0.3in
]



\printAffiliationsAndNotice{}  

\begin{abstract}
    Clustering in a big data setting is an intensively studied problem, with coresets emerging as one of the important paradigms in this line of work. Given a cost function $\text{cost}(P,S)$ mapping input points $P$ and a solution $S$ to an objective value, a coreset is a typically weighted sketch $\Omega\subseteq P$ such that $\text{cost}(\Omega,S)\approx \text{cost}(P,S)$. In practice, coreset sizes much smaller than those suggested by theoretical guarantees are often found to be sufficient.
    In this paper, we offer an explanation for this phenomenon. Smaller coreset sizes suffice if we only wish to preserve the costs of \emph{good} solutions, i.e., solutions with low cost. We define and devise \emph{approximation-preserving coresets}, which provide a weaker guarantee than strong coresets, which apply to all solutions, while providing stronger guarantees than weak coresets, which apply only to the optimum solution. We complement this result by showing that even a very small distortion in the approximation factor cannot admit coresets of this size.

\end{abstract}

\section{Introduction}

Clustering is a popular tool to partition and preprocess data in a machine learning pipeline, in particular when the data set is very large. 
To manage such data sets, a number of techniques have arisen to make clustering algorithms more scalable.
Arguably, the most important technique in this line of work is \textit{coresets}, see \cite{Feldman20,MunteanuS18}, which downsample the data to obtain a summary that behaves similarly to the full dataset with respect to the clustering cost. As a rule of thumb, coresets can often be computed either in linear time \cite{cohenleveragescore,schwiegelshohnsettling}, or at least very efficiently compared to any optimization algorithm. 
Thus, instead of running an algorithm on the full input, one can first construct a much smaller coreset and then run the algorithm on this summary, leading to significant savings in time and memory.

Consider the  Euclidean $k$-means problem. Here, given $P\subseteq\R^d$,
the goal is to find $k$ centers $S$ minimizing
\[
\cost(P,S)=\sum_{p\in P}\min_{s\in S}\|p-s\|_2^2.
\]
A \emph{strong coreset} $\Omega$ is a (weighted) subset of $P$ such that, for
every $k$-center set $S$,
\begin{equation}
\label{eq:strongcoreset}
\left|\cost(P,S)-\cost(\Omega,S)\right|\le \varepsilon\cdot \cost(P,S).
\end{equation}
For Euclidean $k$-means, the current state-of-the-art \emph{strong coreset} bounds
\cite{CSS21,Cohen-AddadLSSS22,HLW23} give $\tilde{O}\!\left(k\varepsilon^{-2}\min\{\sqrt{k},\varepsilon^{-2}\}\right)$
points,\footnote{We use $\tilde{O}(x)$ to suppress $\poly \log $ factors in $x$.}
and this bound is known to be optimal in the worst case \cite{HLW23}. At the same
time, it often appears overly pessimistic in practice: empirical studies
frequently report that sampling on the order of $O(k)$ points already preserves
costs quite accurately \cite{empiricalcoresets}. Several works have proposed
explanations for this gap by identifying additional structure under which much
smaller summaries suffice. For instance, \emph{lightweight} coresets
\cite{BachemL018} are smaller but measure error relative to the $1$-means cost,
making them most effective when the data is not well clusterable; 
when the data is highly clusterable, substantially smaller coresets are also
possible \cite{sensitivitycoresets}.

In this paper, we offer another perspective on why worst-case strong coreset
bounds can be pessimistic in practice: coresets are rarely queried on arbitrary
solutions. Instead, they are used as preprocessing; one constructs a coreset and
then runs a clustering algorithm on the summary, evaluating only the solutions
that the algorithm produces. In particular, many pipelines use approximation
algorithms that return a solution whose cost is guaranteed to be no worse than a factor $\alpha$ times the cost of an optimal solution. Examples include $k$-means++ seeding (an $\alpha \in O(\log k)$) \cite{ArV07}
or local search ($\alpha\in O(1)$)
\cite{arya2001local,kanungo2002local,cohen2022improved}. Since these algorithms
can still be expensive on large datasets, it is especially valuable to reduce
the input as much as possible before running them.

A strong coreset guarantee (Equation~\ref{eq:strongcoreset}) is sufficient to
preserve the performance of any approximation algorithm, since it preserves the
cost of \emph{every} solution. The key question is whether this uniform guarantee is actually necessary. In
particular:

\begin{tcolorbox}[colback=gray!10,colframe=gray!40,boxrule=0.5pt,arc=2pt,
                  left=6pt,right=6pt,top=4pt,bottom=4pt]
\begin{quote}
Are smaller summaries possible if we only require a sketch to
\textit{preserve the approximation guarantees} of clustering algorithms?
\end{quote}
\end{tcolorbox}

\subsection{Our Contributions}
We answer the above question in the affirmative. We introduce the notion of \emph{approximation preserving coresets} (APCs), which guarantee that a solution computed on the summary can be efficiently post-processed into a good approximate solution for the full input.

\begin{definition}[$\beta$-Approximation Preserving Coresets]
A (weighted) set $\Omega$ is called a $\beta$-approximation preserving coreset for $P$ if, given any $\alpha$-approximate $k$-means solution $S$ for $\Omega$, there exists an algorithm that uses $\Omega$ and $S$ to compute an $\alpha\beta$-approximate $k$-means solution $S'$ for $P$.
\end{definition}

We first show that Euclidean $k$-means admits small approximation-preserving coresets.
\begin{theorem}
\label{thm:euclidean-apc}
There is an algorithm to construct a ${(1+\eps)}$-\apc of size $\tilde{O}(k \eps^{-3})$ for Euclidean $k$-means.
\end{theorem}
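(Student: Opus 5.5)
We plan to build $\Omega$ by the standard two-stage sensitivity sampling, with one change: we only aim to preserve costs of solutions that are within a constant factor of optimal, and we handle everything else through the post-processing step that the definition permits.

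\emph{Construction.} First compute a constant-factor solution $A$ for $P$, for instance by $k$-means++ followed by local search. Then draw $m=\tilde{O}(k\eps^{-3})$ points i.i.d.\ with the usual sensitivities
\[
\sigma(p)\propto \frac{\cost(p,A)}{\cost(P,A)}+\frac{1}{|C_a|},
\]
where $C_a$ is the cluster of $A$ containing $p$. Each sample is weighted by $1/(m\sigma(p))$. We also add the centers of $A$ with cluster weights, or equivalently keep $A$ on the side; this lets the post-processing compare against $A$.

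\emph{Post-processing.} Given an $\alpha$-approximate solution $S$ for $\Omega$, output $S'=S$ when $\cost(\Omega,S)$ is at most $\cost(\Omega,A)$ times a suitable factor. Otherwise output the better of $S$ and $A$, evaluated on $\Omega$. A cleaner choice that fits the approximation-preserving form is to return $S$ itself. The required guarantee is then
\[
\cost(P,S)\le(1+\eps)\,\alpha\,\OPT(P).
\]

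\emph{Analysis.} We split into two cases.
\begin{enumerate}
\item If $\cost(P,S)\ge \alpha\cdot c\cdot\OPT$ for a large constant $c$, it suffices to show that $\cost(\Omega,S)$ is not too small. Since $\Omega$ is a fixed nonnegative sample, a one-sided lower-tail bound holding simultaneously for all $S$ is much cheaper than a two-sided one. For expensive solutions one only needs $\cost(\Omega,S)\ge \Omega(1)\cdot\cost(P,S)$, which follows from an $O(k\log k)$-size net argument with a constant-factor error. Combined with $\cost(\Omega,S)\le\alpha\,\OPT(\Omega)\le \alpha(1+\eps)\OPT(P)$, where the last inequality uses only the single optimal solution of $P$ and a Bernstein bound, this yields a contradiction.
\item If $S$ is a good solution, $\cost(P,S)\le c\,\alpha\,\OPT$, we need a $(1+\eps)$ two-sided guarantee only relative to $\OPT$, in the form
\[
|\cost(\Omega,S)-\cost(P,S)|\le \eps\,\OPT.
\]
Here we reuse the chaining analysis of \cite{CSS21,Cohen-AddadLSSS22}, restricted to solutions of cost $O(\alpha\OPT)$. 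For such solutions the variance term per point is bounded by $\OPT$-scale quantities.
\end{enumerate}

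We expect the main obstacle to be showing that restricting to low-cost solutions saves the extra $\min\{\sqrt{k},\eps^{-2}\}$ factor and leaves only $k\eps^{-3}$, and doing so uniformly in $\alpha$. The error must scale like $\eps\alpha\OPT$ rather than $\eps\OPT$, so the target can be relaxed to
\[
|\cost(\Omega,S)-\cost(P,S)|\le\eps\,\cost(P,S)+\eps\,\OPT.
\]
Our first attempt will be to bound the $\eps$-net size of clustering cost vectors for near-optimal $S$: each point's distance to $S$ is within $O(\sqrt{\alpha})$ times its distance to $A$ in aggregate. We will then use a terminal-embedding or Johnson–Lindenstrauss reduction to dimension $O(\eps^{-2}\log k)$, giving nets of size $\exp(\tilde{O}(k\eps^{-1}))$ per scale. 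A union bound with variance $\eps^{-2}$ per scale would then give $k\eps^{-3}$. If this fails for large $\alpha$, we will instead argue only for $\alpha\le\eps^{-1}$ and cover larger $\alpha$ with case 1, where $(1+\eps)\alpha\ge\alpha+1$ absorbs the additive constant-factor slack.
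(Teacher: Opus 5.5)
\textbf{There is a genuine gap.} You give up the one freedom that makes the statement possible. By returning $S$ itself, you must control $\cost(P,S)$ for every solution that could be $\alpha$-approximate on $\Omega$, for every $\alpha$.

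Your own relaxed case-2 target, $|\cost(\Omega,S)-\cost(P,S)|\le\eps\cost(P,S)+\eps\OPT$, is a $2\eps$-strong-coreset inequality, since $\OPT\le\cost(P,S)$. Case 1 places every $S$ into case 2 for $\alpha=\cost(\Omega,S)/\OPT(\Omega)$. So, uniformly in $\alpha$, your plan amounts to a strong coreset of size $\tilde O(k\eps^{-3})$. That is impossible for $k\gg\eps^{-2}$ by the $\Omega(k\eps^{-2}\min\{\sqrt{k},\eps^{-2}\})$ lower bound of \cite{HLW23}. Even the weaker condition that $S'=S$ actually requires, $\cost(P,S)/\OPT(P)\le(1+\eps)\cost(\Omega,S)/\OPT(\Omega)$ for all $S$, is a normalized no-underestimation strong-coreset condition. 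Nothing in your plan explains why it should be cheaper.

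The step where this shows up concretely is the net count. After a JL or terminal embedding to $D=O(\eps^{-2}\log k)$ dimensions, each of the $k$ centers still ranges over a $D$-dimensional region. A net at the required resolution therefore has size $\exp(\tilde\Theta(kD))=\exp(\tilde\Theta(k\eps^{-2}))$, not $\exp(\tilde O(k\eps^{-1}))$. Restricting to near-optimal solutions does not lower this dimension count, and with the $\eps^{-2}$ variance factor you are back at $k\eps^{-4}$.

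Your fallback for $\alpha>\eps^{-1}$ also fails. Case 1 only gives $\cost(P,S)\le O(\alpha)\OPT$, a multiplicative constant loss. The absorption via $(1+\eps)\alpha\ge\alpha+1$ would instead need something like $\cost(P,S)\le\cost(\Omega,S)+O(\OPT)$, an additive bound that a constant-factor lower-tail estimate does not provide.

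\textbf{The missing idea} is to use the post-processing to \emph{round} any $S$ into a small witness family that is fixed before $\Omega$ is drawn. The paper proceeds as follows.
\begin{enumerate}
\item It first takes a strong coreset $\Omega'$ of size $\poly(k/\eps)$, then samples $\Omega$ from $\Omega'$.
\item The lifting does one Lloyd-style recentering of $S$ on $\Omega$. It replaces each cluster mean by the mean of $3/\eps$ weighted samples from that cluster, boosted over $O(\log k)$ trials. By Lemma~\ref{lem:inaba-short}, this raises $\cost(\Omega,\cdot)$ by at most a $(1+\eps)$ factor.
\item Since $\Omega\subseteq\Omega'$, the lifted solution lies in the family $\mathcal F$ of $k$-tuples of means of $3/\eps$-multisets of $\Omega'$. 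This family has $\log|\mathcal F|=O(k\eps^{-1}\log(k/\eps))$ and is independent of the second-stage randomness.
\item Theorem~\ref{thm:sensitivity}, applied to $\mathcal F\cup\{C^*\}$, then gives size $\tilde O(k\eps^{-3})$.
\item The chain $\cost(P,S')\le\frac{1}{1-\eps}\cost(\Omega,S')\le\frac{1+\eps}{1-\eps}\cost(\Omega,S)\le\frac{1+\eps}{1-\eps}\alpha\cost(\Omega,C^*)\le\frac{(1+\eps)^2}{1-\eps}\alpha\OPT(P)$ finishes the proof.
\end{enumerate}
Your guessed count $\exp(\tilde O(k\eps^{-1}))$ is exactly the size of this witness family. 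It is attainable only for such structured, post-processed solutions, not for all near-optimal ones.
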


For comparison, optimal strong coresets for Euclidean $k$-means have worst-case size
$\tilde{O}\!\left(k\varepsilon^{-2}\min\{\sqrt{k},\varepsilon^{-2}\}\right)$, illustrating that preserving only approximation guarantees can yield smaller summaries.

We next study APCs for $k$-means in arbitrary metric spaces. For an $n$-point
metric, strong coresets have optimal worst-case size $\Theta(k\eps^{-2}\log n)$,
and more generally known strong coreset bounds typically depend on structural
properties of the metric (e.g., intrinsic dimension such as doubling dimension,
or restrictions to special metric families such as graph-induced metrics). In
contrast, we obtain $(4+\eps)$-APCs for arbitrary metrics whose size is
independent of input size and metric-dependent parameters.

\begin{theorem} \label{thm:metric-apc} There is an algorithm to construct a $(4+\eps)$-\apc of size $\tilde{O}(k \eps^{-2})$ for $k$-means on any metric. \end{theorem}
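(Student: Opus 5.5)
We prove \cref{thm:metric-apc} by analyzing standard sensitivity sampling, but we only need to preserve the cost of solutions whose centers lie in $\Omega$, and then map back. The plan has two parts. First, a coreset guarantee restricted to candidate centers drawn from the sample. Second, a post-processing step that turns a solution with centers in $\Omega$ into a solution for $P$. The constant $4$ is the price of moving arbitrary centers onto points of $\Omega$ in a general metric, using $(a+b)^2\le 2a^2+2b^2$ twice.

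\textbf{Construction.} Compute a constant-factor approximation $A$ for $P$ with $|A|=O(k)$. Sample $m=\tilde O(k\eps^{-2})$ points i.i.d.\ with probability proportional to the sensitivity upper bounds $\frac{\cost(p,A)}{\cost(P,A)}+\frac{1}{|C_p|}$, where $C_p$ is the cluster of $A$ containing $p$. Reweight by inverse probability, and add the centers of $A$ with weights equal to their cluster sizes; this helps control the constant-factor approximation.

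\textbf{Key lemma.} The main obstacle is proving that, with constant probability, every $k$-set $S\subseteq\Omega$ satisfies
\[
\cost(P,S)\le(1+\eps)\cost(\Omega,S)+\eps\cost(P,A).
\]
Two-sidedness is not needed here, only this direction. The direction $\cost(\Omega,S^*_P)\lesssim\cost(P,S^*_P)$, for a fixed optimal $S^*_P$ of $P$, holds in expectation by Markov's inequality or a simple variance bound.

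The difficulty is that $S$ depends on the sample, so a naive union bound over $n^k$ solutions gives a $\log n$ factor. I would avoid this by a symmetrization / leave-one-out argument. The candidate centers are among the $m$ sampled points, so the number of candidate solutions is at most $\binom{m}{k}\le m^k$. A union bound over $m^k$ solutions costs $k\log m$, which gives size $\tilde O(k\eps^{-2})$ via Bernstein's inequality with sensitivity-bounded variance.

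To make this rigorous despite the dependence, I would use a ghost-sample / compression argument. Draw $2m$ samples and choose a random half as $\Omega$. Any $S$ from the first half is fixed with respect to the second half's randomness, up to the $\binom{2m}{k}$ choices. This is the step that needs the most care: the variance must be bounded relative to $\cost(P,S)+\cost(P,A)$, after splitting points into those close to and far from $A$, as in standard sensitivity analyses.

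\textbf{Post-processing and approximation.} Given an $\alpha$-approximate $S$ for $\Omega$, replace each center $s\in S$ by its nearest point of $\Omega$ (in a metric, a center may be any point). This costs a factor of at most $4$ under squared distances: for $x\in\Omega$,
\[
d(x,s')^2\le(d(x,s)+d(s,s'))^2\le 4d(x,s)^2.
\]
So $S'\subseteq\Omega$ has $\cost(\Omega,S')\le 4\alpha\,\OPT_\Omega$.

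Let $S^*$ be optimal for $P$ and project it into $\Omega$ similarly. Because $\Omega$ contains points near every cluster of $S^*$, and in particular the weighted centers of $A$ approximate the clusters, we get $\OPT_\Omega\le(1+\eps)\OPT_P$. This uses only the one-sided expectation bound for the fixed solution $S^*$; no projection is needed, since $\OPT_\Omega\le\cost(\Omega,S^*)$ when centers may lie anywhere in the metric.

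Finally, the key lemma gives
\[
\cost(P,S')\le(1+\eps)4\alpha(1+\eps)\OPT_P+\eps\cost(P,A)=(4+O(\eps))\alpha\,\OPT_P.
\]
Rescaling $\eps$ completes the proof.
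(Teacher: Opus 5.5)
There is a genuine gap in the size bound, and it is the point the paper's main tool, \Cref{thm:sensitivity}, exists to address. Under your distribution $\mu(p)\propto \cost(p,A)/\cost(P,A)+1/|C_p|$ the normalizer is $1+|A|=\Theta(k)$. So only a $\Theta(1/k)$ fraction of the samples follows $\cost(p,A)$; a cluster carrying all of $\cost(P,A)$ receives only $\Theta(m/k)$ samples. The sensitivity-based variance bound for a fixed $S$ is then $\mathfrak{S}/m=\Theta(k/m)$, which gives a tail of only $\exp(-\Omega(\eps^2 m/k))$. A union bound over $\binom{m}{k}$ candidate sets therefore forces $m=\Omega(\eps^{-2}\cdot k\cdot k\log m)=\tilde{\Omega}(k^2\eps^{-2})$. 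In effect, you dropped the total-sensitivity factor in front of $\log N$.

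Getting the additive form $\eps^{-2}(k\log(k/\delta)+\log N)$ needs both a different distribution and a different decomposition:
\begin{itemize}
\item The paper's $\mu$ puts constant mass on both $\cost(p,A)/\cost(P,A)$ and $\Delta_p/\cost(P,A)$.
\item Clusters, not points, are split, according to whether $\cost(a_j,S)>\eps^{-2}\Delta_j$.
\item Far clusters and the anchor term $\sum_j|C_j|\cost(a_j,S)$ are controlled deterministically for all $S$ at once via the cluster-level Event $E$. This costs a union bound over only $O(k)$ events.
\item Only the close-cluster remainder $\cost(p,S)-\cost(a(p),S)$ is union-bounded over the $N$ solutions. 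It has total variance $O(1/m)$ and per-term range $O(1/(\eps m))$ relative to $\cost(P,S)$.
\end{itemize}
Your ``splitting points into those close to and far from $A$'' does not supply this.

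On the dependence of $S\subseteq\Omega$ on the sample, the paper iterates (\Cref{alg:iterative-finite}). Each $\Omega_i$ is drawn from $\Omega_{i-1}$ and only preserves the at most $|\Omega_{i-1}|^k+1$ solutions fixed by earlier rounds. The sizes are $\Gamma(\log^{(i)}n)^2$ and the accuracies $\eps(\log^{(i)}n)^{-1/2}$ sum to $O(\eps)$. Your one-shot idea can replace this, but not in the ghost-sample form you describe: if $S$ comes from one half, you get concentration on the other half, not on $\Omega$.

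The clean version is leave-$k$-out compression. For each index set $I$ with $|I|=k$, the set $S_I=\{q_i:i\in I\}$ depends only on $(q_i)_{i\in I}$, and the other $m-k$ samples are i.i.d.\ and independent of it. The $k$ excluded terms change the remainder sum by $O(k/(\eps m))\cost(P,S_I)$ (range bound of \Cref{clm:remainder-bernstein-inputs}) and its mean by $O(k/m)\cost(P,S_I)$. Since Event $E$ and the far/anchor bounds hold for every $S$ simultaneously, a union bound over $\binom{m}{k}$ index sets gives $m=O(k\eps^{-2}\log(k/\eps))$ in one round. That is simpler than the iteration, but it works only on top of the refined concentration argument.

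Two smaller fixes are needed. First, adding the centers of $A$ with weights $|C_j|$ double-counts. It gives $\E[\cost(\Omega,S)]=\cost(P,S)+\sum_j|C_j|\cost(a_j,S)$, which breaks $\OPT_\Omega\le(1+\eps)\OPT_P$ by a constant factor and loses the $4+\eps$, so drop it. Second, for $S^*$ you need Chebyshev (or \Cref{thm:sensitivity} with $N=1$), not Markov, which gives only a constant factor. Your lifting step and the factor $4$ match the paper's \Cref{lem:centersfromcoreset}.
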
 Moreover, for some $n$-point metrics, any $\beta$-\apc with ${\beta<4}$ must have size $\Omega({\log n/\log \log n})$, showing that a factor of $4$ is essentially necessary to avoid dependence on $n$.

Beyond coresets, this result also has algorithmic implications: it yields a fixed-parameter tractable (FPT) $(4+\eps)$-approximation guarantee for $k$-means in arbitrary metric spaces. Notably, this factor is close to the best known FPT approximation ratio of $(1+\tfrac{8}{e}+\eps)\approx 3.94$ for metric $k$-means \cite{cohen2019tight}\footnote{Their guarantee is slightly stronger in that candidate centers are not necessarily part of the input. Nevertheless, we believe that even the case where input points are always candidate centers is interesting.}. The running time of our algorithm is $O(nk) + (\eps^{-1} \cdot \log k )^{O(k)}$.

Our algorithms retain the efficiency typically expected of coresets.
The \apc can be constructed in nearly linear time in the input size in Euclidean space and in optimal time in general metrics.
Moreover, given an approximate solution $S$ on the coreset, the corresponding
approximate solution $S'$ for the original input can be recovered efficiently. Finally, the constructions are composable: \apc\!s built for
disjoint subsets can be merged into an \apc for their union, making them well
suited for streaming and distributed settings.

\paragraph{Sensitivity sampling.}
Both APC constructions are based on \emph{sensitivity sampling}, a simple, widely used
coreset algorithm (\Cref{alg: sensitivity-sampling}). The heart of our analysis is a new concentration bound for sensitivity sampling
 (Theorem~\ref{thm:sensitivity}), which we use as a black box in our results. 
 We expect this theorem to be of independent interest. For example, it yields strong coresets in finite $n$ point metrics of size $O(k\varepsilon^{-2}\log n)$. This construction matches the lower bound from \cite{CGSS22} and is, to the best of our knowledge, the first analysis known to be optimal; all previous analyses lost $\text{polylog}$ factors.\footnote{The magnitude of these additional log factors is often not stated. However, all previous work seems to have required at least an additional factor $\log k\log\log n$ \cite{FL11,Cohen-AddadD0SS25}, or $\log^5 \varepsilon^{-1}$ \cite{CSS21}.}

\subsection{Related Work}
The strong coreset guarantee as defined in Equation \ref{eq:strongcoreset} was first proposed by \cite{HaM04}. Over the next two decades, there has been substantial activity in this field. For Euclidean $k$-means specifically, the early algorithms \cite{HaM04,FrahlS2005,HaK07} tended to use geometric techniques and typically gave coresets with an exponential dependency on $d$. The modern approach of using sampling and learning theory was initiated in the seminal work by Chen \cite{Chen09}. Shortly thereafter, \cite{LS10} proposed sensitivity sampling for coresets, which was further improved by \cite{FL11} . Subsequently, most works tended to focus on dimension reduction \cite{BecchettiBC0S19,Cohen-AddadSS21,CSS23,FeldmanSS20,huang2020coresets}. \cite{CSS21} discovered the group sampling algorithm, an easier to analyze if slightly slower and more cumbersome alternative to sensitivity sampling and used it to prove several optimal coreset bounds including the aforementioned $\tilde{O}(k\varepsilon^{-4})$ bound for $k$-means. For small $\varepsilon$, this was further improved independently by \cite{Cohen-AddadLSSS22} and \cite{HLW23} to $\tilde{O}(k\sqrt{k}\varepsilon^{-2})$. These bounds were subsequently matched by sensitivity sampling in \cite{sensitivitycoresets}.
The first non-trivial lower bounds of the order $\Omega(k\varepsilon^{-2})$ were given by \cite{CGSS22}, and subsequently improved by \cite{HLW23} to match the state of the art upper bounds up to logarithmic factors.

Other metrics followed the same frameworks as introduced for Euclidean spaces, but required other ways to characterize the number of solutions. Examples include finite metrics \cite{Chen09,FL11,CSS21}, doubling metrics \cite{HJLW18,CSS21}, graph metrics \cite{baker2020coresets,BandyapadhyayF023,braverman20minor,Cohen-AddadD0SS25}, time series \cite{BravermanCJKST022,HuangSV21,Cohen-AddadD0SS25,ConradiKPR24}.

A moral predecessor to approximation preserving coresets is the notion captured by weak coresets. While there is no unified notion, they all require that one can use the coreset to extract a (near) optimal solution. 
The earliest example of such a guarantee probably dates back to the minimum enclosing ball problem, see \cite{BaC08}, but several examples exist for other objectives as well, see \cite{CarmelGJK25,carmel2026stable,Woodruff024a} and references therein. For $k$-means clustering, such coresets were proposed in \cite{BlomerBB18,FeldmanMS07,FL11,Jaiswal024}. 
Historically, these coresets were commonly used to obtain $(1+\varepsilon)$-approximations on small summaries while bypassing the curse of dimensionality applying to strong coresets, see for instance \cite{BaC08,FeldmanMS07}. Our guarantee is stronger in that it applies to the performance of any algorithm and not only to a typically computationally expensive algorithm used to extract near optimal solutions on weak coresets.

\subsection{Technical Overview}
\label{sec:tech-overview}
We begin by recalling how strong coresets for clustering are typically analyzed,
using Euclidean $k$-means as a representative example. The standard analysis therefore proceeds in two steps.
First, one proves that for a \emph{fixed} set of centers $S$, the coreset cost,
$\cost(\Omega,S)$, concentrates around the true cost,  $\cost(P,S)$, with high
probability. Second, we enumerate over all possible solutions. Since this number is infinite in Euclidean space, we discretize the solution space via a net argument: one constructs a finite
set of candidate solutions called \emph{net} and
argues that preserving costs for all solutions in the net suffices to preserve
costs for every solution, followed by a union bound over the net.

At a high level, the coreset size is dictated by the size of this net, i.e., by
the number of candidate solutions whose costs must be preserved. In particular,
we show later that for any fixed family $\mathcal{S}$ of $N$ solutions, a coreset
of size
\begin{align}\label{eq:ucbound}O\!\left(\varepsilon^{-2}\bigl(k\log k+\log N\bigr)\right)
\end{align}
suffices to preserve the costs of all solutions in $\mathcal{S}$ up to
multiplicative $(1\pm\varepsilon)$ factors.

Existing strong coreset constructions for Euclidean $k$-means rely on $\eps$-nets
over the space of all solutions. At the resolution needed for multiplicative
cost preservation, such a net has size about $N = \exp(\tilde{\Theta}(k/\varepsilon^{2}))$,
yielding a
$\tilde{O}(k\varepsilon^{-4})$ coreset, which is known to be optimal when $k\gg \eps$ \cite{HLW23}. Importantly, strong coresets must preserve even highly
pathological solutions that no reasonable algorithm would ever output. APCs
exploit this slack: rather than targeting \emph{all} $k$-tuples of centers, we
only need to preserve costs for the ``meaningful'' solutions that can arise as
outputs (or near-outputs) of approximation algorithms, thereby reducing the
effective $N$ and sidestepping strong coreset barriers.

\paragraph{Witness Sets.}
This shifts the goal from preserving costs for \emph{all} solutions to a more
targeted question: which solutions must have their costs preserved so that any
$\alpha$-approximate solution computed on $\Omega$ can be efficiently converted
into an $\alpha\beta$-approximate solution for $P$?

We formalize this via \emph{witness sets}: a finite family $\mathcal{S}$ of
$k$-center sets satisfying two properties.

\noindent\textbf{(1) Preserving an optimal solution.}
$\mathcal{S}$ contains an optimal solution
$C^*\in\arg\min_{|C|=k}\cost(P,C)$.

\noindent\textbf{(2) Mapping property.}
For every solution $S$, there exists $S'\in\mathcal{S}$ such that
\[
    \cost(\Omega,S') \le \beta \cdot \cost(\Omega,S).
\]

Assume $\Omega$ preserves costs for all $S\in\mathcal{S}$ up to
$(1\pm\varepsilon)$ factors. It is immediate from (1) that
\[
\OPT(\Omega)\;\le\;\cost(\Omega,C^*)\;\approx\;\cost(P,C^*)\;=\;\OPT(P),
\]
so the coreset optimum cannot be much larger than the true optimum. Now let $S$
be an $\alpha$-approximate solution on $\Omega$. By (2), we can round $S$ to
$S'\in\mathcal{S}$ with $\cost(\Omega,S')\le \beta\cost(\Omega,S)$. Transferring
this bound from $\Omega$ back to $P$ using cost preservation on $\mathcal{S}$
gives
\begin{align*}
\cost(P,S') &\;\approx\; \cost(\Omega,S') \;\le\; \beta\,\cost(\Omega,S)\\
&\;\le\; \alpha\beta\,\OPT(\Omega)\;\lesssim\;\alpha\beta\,\OPT(P). 
\end{align*}
 Thus, preserving costs on $\mathcal{S}$ is
sufficient to convert any $\alpha$-approximate solution on $\Omega$ into an
$\alpha\beta$-approximate solution for $P$, i.e., $\Omega$ is a $\beta$-APC.

Note that while it may be tempting to set $\mathcal{S}$ to be the optimal solution of $\Omega$, there are several issues with this argument. First, the mapping function may not be efficiently computable. Second, and more importantly, $\mathcal{S}$ now heavily depends on $\Omega$ and thus it is difficult to use the randomness of $\Omega$ to prove concentration.

The remaining question is therefore algorithmic and geometric: how do we efficiently
construct \emph{small} witness sets for concrete clustering objectives that satisfy the above mapping property? We answer this
by exploiting structure in the solution space that allows arbitrary coreset
solutions to be rounded into a much smaller family.

\paragraph{Witness Sets for Euclidean Space.}
We now sketch the construction of witness sets for Euclidean spaces.

The key geometric fact about Euclidean space is that the empirical mean of a small uniform sample closely approximates the true mean \cite{bertolotti2026simple,InabaKI94,Cohen-AddadLS25}: for any point set $C\subset\R^d$ with mean
$\mu(C)$, there exists a subset $T\subseteq C$ of size $O(1/\varepsilon)$ whose
mean $\mu(T)$ satisfies
\[
\cost(C,\mu(T)) \le (1+\varepsilon)\cost(C,\mu(C)).
\]
Equivalently, every cluster has a $(1+\varepsilon)$-approximate mean that is the
average of only $O(1/\varepsilon)$ input points.

Define $\mathcal M$ to be the set of all means of $O(1/\varepsilon)$-subsets of
the input, and let the witness family $\mathcal S$ consist of all $k$-tuples of
centers from $\mathcal M$. 
Given an arbitrary solution $S$, form its induced clusters and replace each
center by a $(1+\varepsilon)$-approximate cluster mean from $\mathcal M$.
This is exactly a single (Lloyd-style)  \emph{re-centering step} \cite{lloyd1982least}
followed by snapping each center to a structured representative.
The resulting $S'\in\mathcal S$ satisfies $\cost(\Omega,S')\le (1+O(\varepsilon))\cost(\Omega,S)$,
yielding the mapping property with $\beta=1+O(\varepsilon)$.

Since $|\mathcal M|\le n^{O(1/\varepsilon)}$, we have
$|\mathcal S|\le |\mathcal M|^k \le n^{O(k/\varepsilon)}$ and hence
$\log|\mathcal S|=\ O(k \eps^{-1} \log n)$. For Euclidean spaces, due to the existence of strong $\eps$-coresets of size $\poly(k/\eps)$ we can further assume that $n = \poly(k/\eps)$ which means that $\log|S| = \tilde{O}(k/\eps)$. 
It follows from \Cref{eq:ucbound} that we have $(1+\eps)$-APCs of size 
$\tilde O(k/\varepsilon^3)$. 
\subsection{Organization of the Paper}
\Cref{sec:sens-algo} presents sensitivity sampling and our
concentration bound (Theorem~\ref{thm:sensitivity}) for this algorithm. Sensitivity sampling is
the core subroutine used in all of our APC constructions.
\Cref{sec:euclidean} gives a $(1+\eps)$-APC for Euclidean $k$-means of size
$\tilde{O}(k\eps^{-3})$, and \Cref{sec:finite-metric} gives a $(4+\eps)$-APC for
arbitrary metrics of size $\tilde{O}(k\eps^{-2})$ together with an FPT
approximation algorithm. We complement these results with a lower bound for
$\beta<4$ and conclude with experiments in \Cref{sec:experiments}.

\section{Sensitivity Sampling}
\label{sec:sens-algo}
We will use the Sensitivity Sampling Algorithm given below as the sampling primitive
in all of our APC constructions. The algorithm first computes a constant-factor
$k$-means solution. It then defines a sampling distribution $\mu$ based on this
clustering, draws $m$ i.i.d.\ samples, and assigns inverse-probability weights.

\begin{algorithm}
\caption{\SensitivitySample$(
\mathcal{M},P,k,m)$}
\label{alg: sensitivity-sampling}
 \begin{algorithmic}[1]
\renewcommand{\algorithmicensure}{\textbf{Input: }}\ENSURE A metric space $\mathcal{M}=(X,d)$, a set of $n$ points $P\subseteq X$, integers $k$ and $m$.
 \\ 
 \medskip
  \STATE Compute a $O(1)$-approximate $k$-means solution ${A = \{a_1,\ldots,a_k\}}$ for $P$ with centers in $X$. Let $C_j \subset P$ be the cluster centered at $a_j$. For a point $p$ in $C_j$, let $\Delta_p:= \cost(C_j,A)/|C_j|$ denote the average cost of $C_j$.  \\
  \medskip
  \STATE Let $\mu: P \rightarrow \R^+$ be the following probability distribution. For a point $p \in C_j$,     \vspace{-5pt}
 {\footnotesize
\[
\hspace*{-\leftmargini}
\mu(p) := \frac 14 \cdot\left(
\frac{1}{k|C_j|}
+ \frac{\cost(p,A)}{k\,\cost(C_j,A)}
+ \frac{\cost(p,A)}{\cost(P,A)}
+ \frac{\Delta_p}{\cost(P,A)}
\right).
\]
}

    \vspace{-5pt}
    \STATE For  $i$ from $1$ to $\csize $:
            \begin{itemize}
                \item Sample point $q_i$ independently from  $\mu$. 
                \item Add $q_i$ to $\Omega$ with weight $w(q_i) :=  1/(\csize \cdot \mu(q_i))$.
            \end{itemize}
        \renewcommand{\algorithmicensure}{\textbf{Output:}}
 \ENSURE The set of points $\Omega = \{q_1, \ldots, q_m\}$ and the weights $\{w(q_1), \ldots, w(q_m)\}$.
 \end{algorithmic} 
 \end{algorithm}

 \paragraph{Remark.}
We would also like to apply sensitivity sampling to weighted inputs. We interpret this by expanding each point $p$ of weight $w(p)$ into $w(p)$ unit-weight copies; equivalently, the sampling probability of $p$ is scaled by $w(p)$, and a sampled point receives weight $w(p)/(m\mu(p))$.

It is known from previous work that sensitivity sampling can be used to construct strong coresets. 

\begin{theorem}[Theorem 1 from \cite{sensitivitycoresets}]
\label{thm:focssensitivity}
Sensitivity sampling yields a strong coreset of size $\tilde{O}(k/\eps^4)$ points in Euclidean space for the $k$-means problem with constant probability.
\end{theorem}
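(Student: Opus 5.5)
This statement is quoted as Theorem 1 of \cite{sensitivitycoresets}. Within this paper it is a black box, so strictly no proof is required. Still, here is how I would reprove it, following the two-step template from the technical overview: concentration for a fixed solution, then a union bound over a discretized solution space.

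First I would reduce the size of the problem. With a preliminary strong coreset, or with dimension reduction (terminal embeddings or Johnson--Lindenstrauss onto $\tilde O(\eps^{-2}\log k)$ dimensions), the effective dimension of the problem becomes independent of $n$ and $d$. Next I would fix a candidate solution $S$ and split the input into groups. Each cluster $C_j$ of the constant-factor solution $A$ is cut into rings according to the ratio $\cost(p,A)/\Delta_p$. A point is \emph{tiny}, \emph{interesting} or \emph{huge} according to how $\cost(p,S)$ compares with $\cost(p,A)+\Delta_p$.

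The three groups would be handled as follows.
\begin{itemize}
\item \textbf{Huge points.} For these, $\cost(p,S)$ is essentially a function of the distance from $a_j$ to $S$, up to a $(1\pm\eps)$ factor. The terms $1/(k|C_j|)$ and $\Delta_p/\cost(P,A)$ in $\mu$ ensure that the sampled mass of every cluster is preserved. Hence these points need no union bound over $S$.
\item \textbf{Tiny points.} Their total contribution is at most $\eps\cdot\cost(P,A)=O(\eps)\,\OPT$ and can be absorbed.
\item \textbf{Interesting points.} For these, $\cost(p,S)\approx\cost(p,A)+\Delta_p$ up to a $\poly(1/\eps)$ factor. The sensitivity terms in $\mu$ then bound the ratio $\cost(p,S)/(m\mu(p))$ by roughly $k\,\eps^{-O(1)}\cost(P,S)/m$, which controls the variance.
\end{itemize}

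The core step is a chaining argument for the interesting points. I would write $\cost(p,S)-\cost(p,A)$ and approximate it at geometric scales $2^{-h}$ by nets $\mathcal N_h$ of clustering vectors. Using the dimension reduction, $\log|\mathcal N_h|\approx k\,2^{2h}\,\mathrm{polylog}$. At each scale I would apply Bernstein's inequality to the increments, which have variance proportional to $2^{-2h}$ times the interesting cost, and then sum the deviations over $h$. The result is an error of $\eps\cost(P,S)$ as soon as $m=\tilde O(k\eps^{-4})$.

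The main obstacle is this chaining step. Naive Bernstein plus a union bound over a single $\eps$-net loses an extra $\eps^{-2}$. The variance bound has to exploit the fact that increments at finer scales are smaller, so that the gain in variance balances the growth of the nets.
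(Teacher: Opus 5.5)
The paper gives no proof of this statement. It is imported from the cited work and used only as a black box: the proof of Lemma~\ref{lem:euclideancostpreservation} conditions on $\Omega'$ being a strong coreset. Your opening remark is therefore exactly the paper's treatment.

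Your optional reconstruction, however, does not deliver the claimed bound as written. For interesting points you bound $\cost(p,S)/(m\mu(p))$ by $k\,\eps^{-O(1)}\cost(P,S)/m$. That is the classical total-sensitivity bound. Combined with nets of log-size $\approx k2^{2h}$, it gives per-scale deviations of order $\sqrt{k^2\eps^{-O(1)}/m}$, i.e.\ a coreset of size $\tilde O(k^2\eps^{-O(1)})$ rather than $\tilde O(k\eps^{-4})$. The reason to split off huge points is that interesting points can then be charged to the global terms of $\mu$ alone. If $\cost(p,S)\le\eps^{-2}(\cost(p,A)+\Delta_p)$, then $\mu(p)\gtrsim(\cost(p,A)+\Delta_p)/\cost(P,A)$ gives $\cost(p,S)/\mu(p)\lesssim\eps^{-2}\cost(P,S)$ with no factor $k$. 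The cap must be exactly $\eps^{-2}$, not an arbitrary $\poly(1/\eps)$, to land on $\eps^{-4}$. Compare the range bound in Claim~\ref{clm:remainder-bernstein-inputs}.

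Two further steps also need repair.
\begin{itemize}
\item Classifying huge points one at a time makes the huge part of $C_j$ depend on $S$. Preserving the cluster mass $\sum_{q\in C_j\cap\Omega}w(q)$ then does not control the sampled mass of that subset. Classify whole clusters instead (far iff $\cost(a_j,S)>\eps^{-2}\Delta_j$). The high-$A$-cost points of far clusters are then absorbed via Cauchy--Schwarz and the $A$-cost condition of Event~$E$, as in Claim~\ref{clm:far-cluster}.
\item A JL map to $\tilde O(\eps^{-2}\log k)$ dimensions preserves costs of partitions, not the per-point costs $\cost(p,S)$ your nets must approximate. Terminal embeddings, in turn, pay $\log$ of the number of terminals. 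So you need either $n=\poly(k/\eps)$ beforehand, or a symmetrization step so that the nets only cover the $m$ sampled points. The first option proves the statement for a two-stage procedure, not for one round of sampling from $P$ as used in line~1 of Algorithm~\ref{alg:twostageeuclidean}.
\end{itemize}

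Finally, chaining is less essential than you suggest. The paper, in proving Theorem~\ref{thm:sensitivity}, splits close clusters into $\cost(a(p),S)$, handled by Event~$E$, and the residual $\cost(p,S)-\cost(a(p),S)$. After normalizing by $\cost(P,S)$, the residual estimator has total variance $O(1/m)$ and per-sample range $O(1/(\eps m))$. With $n=\poly(k/\eps)$, a single union bound over a net of size $\exp(\tilde O(k\eps^{-2}))$ then already gives $\tilde O(k\eps^{-4})$, which is exactly the count in the technical overview. Chaining is forced in your version only because you estimate $\cost(p,S)$ itself, whose normalized variance on interesting points is of order $\eps^{-2}/m$.
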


In this paper we prove a new concentration inequality for the same sampling
procedure, which provides uniform cost preservation over an arbitrary finite
family of solutions. This theorem is a key black box in our APC analysis; its
proof is deferred to \Cref{sec:sens-samp}.

\begin{restatable}{theorem}{UniformSensitivity}
\label{thm:sensitivity}
Let $\Omega$ be a coreset of size $m$ obtained by sensitivity sampling. Suppose $\mathcal S$ is a finite family of $k$-center sets with $|\mathcal S|=N$. For any $\varepsilon, \delta \in (0,1)$, if
\[
    m = C \eps^{-2}\cdot \bigl(k\log (k/\delta) + \log N \bigr)
\]
for an absolute constant $C > 0$, then the coreset preserves the cost of all
solutions in $\mathcal{S}$ up to a multiplicative $(1\pm \eps)$ factor with
probability at least $(1 - \delta)$.
\end{restatable}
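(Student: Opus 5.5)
The plan is to split the error into an \emph{$S$-dependent} part and an \emph{$S$-independent} part. The $S$-dependent part should be handled by Bernstein's inequality plus a union bound over $\mathcal S$; this costs only $\eps^{-2}\log(N/\delta)$ samples, provided every summand is $O(\cost(P,S))$ and the variance is $O(\cost(P,S)^2/m)$. The $S$-independent part consists of a few ``good events'' that concern only the clusters $C_1,\dots,C_k$ of the initial solution $A$, so no union over $\mathcal S$ is needed for them. These events should account for the $\eps^{-2}k\log(k/\delta)$ term.

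\textbf{Decomposition.} Fix $S\in\mathcal S$ and a cluster $C_j$ with center $a_j$. I would call $C_j$ \emph{close} for $S$ if $\cost(a_j,S)\le \eps^{-2}\Delta_j$, where $\Delta_j=\cost(C_j,A)/|C_j|$, and \emph{far} otherwise.
\begin{itemize}
\item \emph{Close clusters.} By the relaxed triangle inequality, $\cost(p,S)\le 2\cost(p,A)+2\cost(a_j,S)$. A factor $\eps^{-2}$ multiplying $\Delta_p$ would destroy the sample bound, so I would rather compare against $\cost(a_j,S)$ directly. Note that $|C_j|\cost(a_j,S)\le O(\cost(C_j,S)+\cost(C_j,A))$, again by the relaxed triangle inequality averaged over the cluster. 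Combined with $\mu(p)\ge\frac14\max\{\cost(p,A)/\cost(P,A),\,\Delta_p/\cost(P,A)\}$, this should bound the ratio $\cost(p,S)/\mu(p)$ by a multiple of $\cost(P,A)+\cost(P,S)=O(\cost(P,S))$, since $A$ is an $O(1)$-approximation. The main point to check is that the per-cluster quantity $\cost(C_j,S)$ is absorbed using the $\frac{1}{k|C_j|}$ and $\frac{\cost(p,A)}{k\cost(C_j,A)}$ terms of $\mu$.
\item \emph{Far clusters.} For points with $\cost(p,A)\le\eps^2\cost(a_j,S)$ we have $\cost(p,S)=(1\pm O(\eps))\cost(a_j,S)$. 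On such points the true and estimated contributions are therefore determined, up to a $(1\pm\eps)$ factor, by the weighted mass of $C_j$ and of its sub-level sets $\{p\in C_j:\cost(p,A)\le\tau\}$.
\end{itemize}

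\textbf{$S$-independent events.} I would require the following for all $j$ and all thresholds $\tau$:
\begin{align*}
\sum_{q\in\Omega\cap C_j,\ \cost(q,A)\le \tau} w(q) &= |\{p\in C_j:\cost(p,A)\le\tau\}| \pm \eps|C_j|,\\
\sum_{q\in\Omega\cap C_j,\ \cost(q,A)>\tau} w(q)\cost(q,A) &= \cost(\{p\in C_j:\cost(p,A)>\tau\},A)\pm\eps\,\cost(C_j,A).
\end{align*}
\begin{itemize}
\item Each cluster receives sampling mass at least $1/(4k)$ and sampling probabilities within $C_j$ are at least $1/(4k|C_j|)$. Hence about $m/k\ge C\eps^{-2}\log(k/\delta)$ samples land in each cluster, with bounded relative weights.
\item For a fixed $j$, the statements for all $\tau$ follow from a one-dimensional threshold (DKW / VC-dimension $1$) argument.
\item A union bound over the $k$ clusters then gives the $k\log(k/\delta)$ term.
\end{itemize}
On these events, the far-cluster contributions are preserved to within $\pm\eps\cost(P,S)$ simultaneously for every $S$. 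The points of a far cluster with $\cost(p,A)>\eps^2\cost(a_j,S)$ are handled by the second event. Their total $A$-cost is at most $\cost(C_j,A)$, while $\cost(a_j,S)|C_j|\ge\eps^{-2}\cost(C_j,A)$, so any error on them is an $O(\eps^2)$-fraction relative to the cluster's $S$-cost.

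\textbf{Conclusion and main obstacle.} Conditioned on the good events, Bernstein gives failure probability $\exp(-c\,\eps^2 m)$ for the close/residual part of each fixed $S$, and a union bound over $N$ solutions finishes the proof. I expect the main obstacle to be the close-cluster sensitivity bound: showing that $\cost(p,S)/\mu(p)=O(\cost(P,S))$ \emph{without} an extra factor $k$ or $\eps^{-2}$. This is exactly where previous analyses lost polylog factors. If the direct bound fails, I would first refine the close/far threshold into geometric rings $\cost(a_j,S)\in[2^i\Delta_j,2^{i+1}\Delta_j)$ and apply Bernstein ring by ring, using that the variance contributions decay geometrically, so the ring count contributes no additional log factor.
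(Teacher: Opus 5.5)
Your architecture matches the paper's: an $S$-independent per-cluster event paid for once at $k\eps^{-2}\log(k/\delta)$, a per-$S$ Bernstein bound with a union bound paid for at $\eps^{-2}\log(N/\delta)$, and the close/far split at $\cost(a_j,S)=\eps^{-2}\Delta_j$. The close-cluster step, however, has a genuine gap, and it is the one you flagged. The bound $\cost(p,S)/\mu(p)=O(\cost(P,S))$ is false, and no Bernstein argument applied to the \emph{full} cost $\cost(p,S)$ can repair it.

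Here is a counterexample. Take $k\ge\eps^{-2}$ well-separated clusters of $n_0$ points each, with every point at $A$-cost exactly $\Delta$. Then $\mu$ is uniform, $\mu(p)=1/(kn_0)$. Let $S$ move the centers of $\eps^2k$ clusters so that $\cost(a_j,S)=\eps^{-2}\Delta$; these clusters are still close.
\begin{itemize}
\item We have $\cost(P,S)=\Theta(kn_0\Delta)$, yet every point of a moved cluster has $\cost(p,S)/\mu(p)=\Theta(\eps^{-2}\cost(P,S))$.
\item This is not just a lossy bound. The estimator is uniform sampling, and its relative error is $\Theta(\eps^{-2}(Y/m-\eps^2))$, where $Y\sim\mathrm{Bin}(m,\eps^2)$ counts samples in moved clusters.
\item So the error exceeds $\eps$ with probability $\exp(-\Theta(\eps^4 m))$, and a per-$S$ union bound forces $m\gtrsim\eps^{-4}\log N$.
\item All moved clusters lie in a single ring and carry about half of $\cost(P,S)$. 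Your ring-by-ring fallback therefore faces exactly the same variance.
\end{itemize}

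The missing idea is that the fluctuation in this example comes entirely from the cluster weights $W_j=\sum_{q\in\Omega\cap C_j}w(q)$. These are $S$-independent and must be controlled once, not per $S$.
\begin{itemize}
\item In close clusters as well, write $\cost(p,S)=\cost(a(p),S)+r_S(p)$.
\item The anchor part of the estimator is $\sum_j W_j\cost(a_j,S)$. On the event $W_j=(1\pm\eps)|C_j|$ for all $j$ (your threshold event with $\tau=\infty$), its error is at most $\eps\sum_j|C_j|\cost(a_j,S)\lesssim\eps\cost(P,S)$, for every $S$ simultaneously.
\item The residual satisfies $|r_S(p)|\lesssim\sqrt{\cost(p,S)\cost(p,A)}+\cost(p,A)$. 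Using $\mu(p)\ge\cost(p,A)/(4\cost(P,A))$, its total normalized variance is $O(1/m)$, with no factor $k$ or $\eps^{-2}$.
\item Closeness together with $\mu(p)\gtrsim(\Delta_p+\cost(p,A))/\cost(P,A)$ bounds each summand by $O(1/(\eps m))$. Bernstein at deviation $\eps$ then gives $\exp(-\Omega(\eps^2 m))$ per $S$.
\end{itemize}

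The same split also settles far clusters, using Cauchy--Schwarz over the whole cluster: $\sum_{p\in C_j}|r_S(p)|\lesssim|C_j|(\sqrt{\cost(a_j,S)\Delta_j}+\Delta_j)=O(\eps)|C_j|\cost(a_j,S)$. The sampled analogue needs only $W_j$ and the sampled $A$-cost of $C_j$, so your threshold/DKW events are unnecessary.

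Your treatment of the high-$A$-cost points in far clusters is also wrong as written. If $\cost(a_j,S)$ is just above $\eps^{-2}\Delta_j$, the threshold $\eps^2\cost(a_j,S)$ is about $\Delta_j$. A constant fraction of $C_j$ can exceed it, and those points carry a constant fraction of the cluster's $S$-cost, so their error is not an $O(\eps^2)$-fraction.

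Finally, do not condition on the good events before applying Bernstein, since that breaks independence of the samples. Union-bound them with the unconditional per-$S$ bounds instead.
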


In particular, for finite $n$ point metrics, $N=\binom{n}{k}\leq n^k$, yielding the following corollary.

\begin{corollary}
Sensitivity sampling yields a strong coreset for $k$-means in finite metrics with $n$ points of size $O(k\varepsilon^{-2}(\log n +\log \delta^{-1}))$ with probability $1-\delta$.
\end{corollary}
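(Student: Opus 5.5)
The corollary follows by applying \Cref{thm:sensitivity} to the family of \emph{all} solutions. The one point to settle first is which solutions a strong coreset must handle in a finite metric. I will take the candidate centers to be the points of the $n$-point metric, so a $k$-means solution is a subset $S\subseteq X$ with $|S|\le k$. Solutions with fewer than $k$ centers can be padded with duplicate centers, or counted directly; either way the number of distinct solutions is
\[
N=\sum_{j\le k}\binom{n}{j}\le n^k
\]
for $n\ge 2$. Since a strong coreset only has to preserve $\cost(\cdot,S)$ for every such $S$, it suffices to preserve the costs of this finite family $\mathcal S$, with $\log N\le k\log n$.

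Plugging into \Cref{thm:sensitivity}, a sample of size
\[
m=C\eps^{-2}\bigl(k\log(k/\delta)+k\log n\bigr)
\]
preserves every $\cost(\Omega,S)$, $S\in\mathcal S$, within $(1\pm\eps)$ with probability at least $1-\delta$. This is exactly the strong coreset guarantee of \Cref{eq:strongcoreset} for all solutions.

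It remains to simplify the size bound, using $k\le n$ (otherwise the problem is trivial).
\begin{itemize}
\item The term $k\log k$ is at most $k\log n$.
\item The term $k\log(1/\delta)$ remains.
\end{itemize}
This gives size $O\bigl(k\eps^{-2}(\log n+\log\delta^{-1})\bigr)$. The stated bound has $\log\delta^{-1}$ rather than $k\log\delta^{-1}$; I would note that \Cref{thm:sensitivity} as stated gives $k\log(k/\delta)$, so either the factor $k$ on $\log \delta^{-1}$ is absorbed when $\delta$ is a constant, or one uses a refined form of the theorem in which the $\delta$-dependence enters additively as $\log\delta^{-1}$ (as is typical for union-bound arguments). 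I would check the proof in \Cref{sec:sens-samp} for which form holds.

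The only real obstacle is this bookkeeping of the $\delta$ term. The rest is a direct union bound over the $\binom{n}{k}$ center sets, already encapsulated in \Cref{thm:sensitivity}.
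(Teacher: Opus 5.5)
Your argument is correct and is the same as the paper's: apply Theorem~\ref{thm:sensitivity} to all $k$-subsets of the $n$ points, so $\log N\le k\log n$, and absorb $k\log k\le k\log n$.

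The $\delta$-bookkeeping you flag is not an issue. In $O(k\eps^{-2}(\log n+\log\delta^{-1}))$ the factor $k$ already multiplies $\log\delta^{-1}$, so the $k\log(1/\delta)$ from the theorem is exactly what the statement allows. You therefore need neither a constant-$\delta$ assumption (which would not be valid for general $\delta$ anyway) nor a refined form of the theorem.
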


As mentioned earlier, this is the first coreset construction in finite metrics that is optimal with respect to all parameters, see \cite{CGSS22} for a matching lower bound.

\section{Approximation Preserving Coresets for Euclidean Space}
\label{sec:euclidean}

In this section, we construct $(1+\eps)$-approximation preserving coresets for Euclidean $k$-means.

\begin{theorem}
\label{thm:euclidean}
For any point set $P \subseteq \R^d$, there is a randomized algorithm that, with constant probability, constructs a $(1+\eps)$-approximation preserving coreset of size $\tilde{O}(k/\eps^3)$ for Euclidean $k$-means.
\end{theorem}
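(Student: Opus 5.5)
The plan follows the witness-set template from the technical overview, with Theorem~\ref{thm:sensitivity} as the concentration tool. There are four steps: reduce $P$, build the witness family, check the two witness properties, and set the sample size.

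\emph{Step 1: reduce the input.} First replace $P$ by a strong coreset $P_1$ of size $n_1=\tilde{O}(k/\eps^4)$ using Theorem~\ref{thm:focssensitivity}. Any solution that is $\gamma$-approximate for $P_1$ is then $(1+O(\eps))\gamma$-approximate for $P$, because the costs of all solutions are preserved. So it suffices to build a $(1+O(\eps))$-APC for the weighted set $P_1$. Next run \SensitivitySample\ on $P_1$ (weighted version, per the Remark) to obtain $\Omega$. Since $\Omega$ is supported on $P_1$, we may take $\Omega$ to be a weighted subset of $P_1$.

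\emph{Step 2: the witness family.} Let $\mathcal M$ be the set of means of all multisets of at most $t=O(1/\eps)$ points of $P_1$, and let $\mathcal S$ be the set of all $k$-tuples from $\mathcal M$ together with an optimal solution $C^*$ of $P_1$. Then $\log|\mathcal S|\le O(k\eps^{-1}\log n_1)=\tilde{O}(k/\eps)$. Crucially, $\mathcal S$ depends only on $P_1$ and not on the randomness of $\Omega$, so Theorem~\ref{thm:sensitivity} applies with $m=C\eps^{-2}(k\log k+\tilde O(k/\eps))=\tilde O(k/\eps^3)$. With constant probability, $\Omega$ preserves every cost in $\mathcal S$ up to $1\pm\eps$.

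\emph{Step 3: the mapping property.} This is the step I expect to be the main obstacle. Given $S$, partition $\Omega$ into the clusters $\Omega_1,\dots,\Omega_k$ that $S$ induces. Replacing each center by the weighted mean of its cluster does not increase $\cost(\Omega,\cdot)$. We then need a point of $\mathcal M$ that approximates that weighted mean within $1+\eps$.

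The subtlety is that $\Omega_i$ is weighted, while $\mathcal M$ consists of unweighted means of points of $P_1$. I would handle this with the probabilistic argument behind the $O(1/\eps)$-sample lemma. Sample $t$ points i.i.d.\ from $\Omega_i$ with probability proportional to weight. The expected excess cost equals $\cost(\Omega_i,\mu(\Omega_i))/t$, so some realization gives a multiset $T\subseteq\Omega_i\subseteq P_1$ with
\[
\cost(\Omega_i,\mu(T))\le(1+1/t)\cost(\Omega_i,\mu(\Omega_i)).
\]
Taking $t=\lceil 1/\eps\rceil$, the mean $\mu(T)$ lies in $\mathcal M$ (allowing multisets, which only changes $\log|\mathcal M|$ by constants). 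The resulting $S'\in\mathcal S$ satisfies $\cost(\Omega,S')\le(1+\eps)\cost(\Omega,S)$.

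The recovery algorithm is exactly this step. It is efficient: rather than computing $T$ explicitly, output the weighted cluster means, which have cost no larger than $S'$ on $\Omega$. To make this work I will phrase the witness argument directly: letting $\hat S$ be the recentered solution,
\[
\cost(P_1,\hat S)\le \cost(P_1,S')/(1-\eps)\cdot(\ldots).
\]
This last inequality does not actually hold for $\hat S$ itself, since $\hat S$ is not in $\mathcal S$. So the safer choice is to output $S'$, computed by derandomizing the sampling via enumerating nothing, or simply by sampling $T$ repeatedly until the cost bound holds; this is checkable on $\Omega$ in small time.

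\emph{Step 4: combine.} Using the chain from the technical overview,
\[
\cost(P_1,S')\le\tfrac{1}{1-\eps}\cost(\Omega,S')\le\tfrac{1+\eps}{1-\eps}\,\alpha\,\OPT(\Omega)\le\tfrac{(1+\eps)^2}{1-\eps}\,\alpha\,\OPT(P_1).
\]
Transferring to $P$ through the strong coreset gives a factor $\alpha(1+O(\eps))$. Rescaling $\eps$ then yields the theorem.
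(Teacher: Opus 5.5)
Your proposal is essentially the paper's proof: a first-round strong coreset (the paper's $\Omega'$, your $P_1$), a witness family of $k$-tuples of means of $O(1/\eps)$-multisets from it together with an optimum, Theorem~\ref{thm:sensitivity} applied to the second round with $\log N=\tilde O(k/\eps)$, a lifting step that recenters each induced cluster and snaps the center to a sampled multiset mean, and the same inequality chain (the paper fixes $C^*$ optimal for $P$ rather than for $P_1$, which makes no difference). One small repair: with $t=\lceil 1/\eps\rceil$ your target $1+\eps$ coincides with the expected cost ratio, so your resample-until-it-holds loop has no constant lower bound on its per-trial success probability; take $t=2/\eps$ or $3/\eps$ (as the paper does via Lemma~\ref{lem:inaba-short}) so that Markov's inequality gives constant success probability per trial.
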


Given any $\alpha$-approximate solution for the coreset, the lifting procedure described below runs in polynomial time (in the coreset size) and returns an $\alpha(1+\eps)$-approximate solution for $P$ with constant probability.

We prove Theorem~\ref{thm:euclidean} in two steps. First, we describe the construction of the APC $\Omega$.  Second, we give a lifting procedure that maps any approximate solution on $\Omega$ to an $\alpha(1+\eps)$-approximate solution with respect to $P$.

\paragraph{APC Construction.}
The coreset is constructed in two rounds of sensitivity sampling: first we compute a strong coreset $\Omega'$ for $P$, and then sample the final APC $\Omega$ from $\Omega'$.

\begin{algorithm}
    \caption{\textsc{Approximation Preserving Coreset} for Euclidean space}
    \label{alg:twostageeuclidean}
    \textbf{Input:} A set of points $P \subset \R^d$, integer $k$ and $\eps \in (0,1/2)$. \\
    Let $c$ be a sufficiently large constant.

    \begin{algorithmic}[1]
    \STATE $\Omega' \leftarrow \SensitivitySample(\R^d, P, k, \tilde O(k/\eps^4))$ \\\COMMENT{Sample a strong coreset for $P$.}
   \STATE \makebox[0pt][l]{$\Omega \leftarrow \SensitivitySample(\R^d,\Omega',k,c(k/\eps^3)\log (k/\eps))$}%

    \COMMENT{Sample the APC.}    
    \end{algorithmic}
\textbf{Output:} The weighted set $(\Omega,w)$ 
\end{algorithm}

\paragraph{Witness family and cost preservation.}
We define a finite family of structured solutions and show that, with constant probability, $\Omega$ preserves the cost of every solution in this family. This family is rich enough to contain a near-cost-preserving representative for any solution computed on $\Omega$.

Let $\mathcal{M}$ denote the set of all means of multisets of $3/\eps$ points from $\Omega'$:
$$
\mathcal{M}
:=
\left\{
\mu(T) : T \text{ is a multiset of } 3/\eps \text{ points from } \Omega'
\right\}.
$$
Let $\mathcal{F}$ be the family of all $k$-center solutions whose centers lie in $\mathcal{M}$:
$$
\mathcal{F}
:=
\left\{
S \subseteq \mathcal{M} : |S|=k
\right\}.
$$

\begin{lemma}
\label{lem:euclideancostpreservation}
With constant probability, it holds that
\begin{enumerate}
    \item (\textbf{Cost Preservation on $\mathcal{M}$-centers}) For all  $S \in \mathcal{F}$,
    $$
    \cost(\Omega,S) \in (1\pm \eps)\cost(P,S).
    $$

    \item (\textbf{Cost Preservation for an Optimal Solution}) There exists an optimum solution $C^* \in \arg\min_C \cost(P,C)$ such that
    $$
    \cost(\Omega,C^*) \in (1\pm \eps)\cost(P,C^*).
    $$
\end{enumerate}

\begin{proof}
We first condition on the event that $\Omega'$ is a strong coreset for $P$. By \Cref{thm:focssensitivity}, this event holds with constant probability and gives cost preservation from $P$ to $\Omega'$ for all solutions.

We observe that $|\mathcal{M}| \le |\Omega'|^{3/\eps}$, and hence $|\mathcal{F}| \le |\Omega'|^{3k/\eps}$. Including a chosen optimum solution $C^*$, the total number of solutions for which the second sampling step should preserve the cost is bounded by
$
N = |\Omega'|^{3k/\eps}+1.
$
Since $|\Omega'|=\poly(k,1/\eps)$, we have $\log N = O\left(\frac{k}{\eps}\log(k/\eps)\right).
$
Thus the sample size used to construct $\Omega$ satisfies the bound of Theorem~\ref{thm:sensitivity}. Therefore, with constant probability,
$$
\cost(\Omega,S) \in (1\pm \eps)\cost(\Omega',S)
$$
for every $S \in \mathcal{F}$ and for the chosen optimum solution $C^*$.

Combining this with the strong coreset guarantee for $\Omega'$ gives
$$
\cost(\Omega,S) \in (1\pm O(\eps))\cost(P,S)
$$
for every required solution $S$. Rescaling $\eps$ by a constant factor completes the proof.
\end{proof}
\end{lemma}

\paragraph{Lifting Algorithm.}
It remains to show that any solution computed on $\Omega$ can be mapped to a solution in $\mathcal{F}$ without increasing its cost on $\Omega$ by much. We use the following standard subset-mean approximation lemma.

\begin{lemma}[Subset mean approximation {\cite{InabaKI94}}]
\label{lem:inaba-short}
Let $Q\subseteq\R^d$ and let $\mu:=\mu(Q)$ be its mean. For any $\eps \in (0,1)$, if $T$ is a multiset of $3/\eps$ points sampled uniformly at random from $Q$, then
\[
\Pr\left[\cost(Q,\mu(T)) \le (1+\eps)\cost(Q,\mu)\right] \ge 2/3.
\]
\end{lemma}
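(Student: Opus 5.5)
The plan is the classical second-moment argument of Inaba et al. Fix $Q=\{x_1,\dots,x_n\}\subseteq\R^d$ with mean $\mu$, and let $T=(t_1,\dots,t_s)$ with $s=3/\eps$ be i.i.d.\ uniform draws from $Q$. The argument has three steps.

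First, I would use the bias--variance identity for squared Euclidean distance. For any point $c$,
\[
\cost(Q,c)=\cost(Q,\mu)+|Q|\,\|c-\mu\|_2^2 .
\]
This follows by expanding $\|x-c\|^2=\|x-\mu\|^2+2\langle x-\mu,\mu-c\rangle+\|\mu-c\|^2$ and noting that the cross terms sum to zero. Applying it with $c=\mu(T)$ reduces the claim to showing
\[
\|\mu(T)-\mu\|_2^2\le \eps\,\cost(Q,\mu)/|Q|
\]
with probability at least $2/3$.

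Second, I would compute the expectation of the left-hand side. Write $\mu(T)-\mu=\frac1s\sum_{i=1}^s (t_i-\mu)$. The vectors $t_i-\mu$ are independent with mean zero, so the cross terms $\E\langle t_i-\mu,t_j-\mu\rangle$ vanish for $i\neq j$. Hence
\[
\E\|\mu(T)-\mu\|_2^2=\frac{1}{s^2}\sum_{i=1}^s\E\|t_i-\mu\|_2^2=\frac{1}{s}\cdot\frac{\cost(Q,\mu)}{|Q|}.
\]

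Third, I would apply Markov's inequality. The event fails only if $\|\mu(T)-\mu\|^2>\eps\cost(Q,\mu)/|Q|$, which happens with probability at most
\[
\frac{1/s}{\eps}=\frac{1}{s\eps}=\frac13 .
\]
This gives success probability at least $2/3$. If $3/\eps$ is not an integer, I would take $s=\lceil 3/\eps\rceil$, which only helps.

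I do not expect a real obstacle here. The only points needing care are that sampling is with replacement, which is what makes the draws independent and kills the cross terms, and that the degenerate case $\cost(Q,\mu)=0$ is handled. In that case all points coincide, so $\mu(T)=\mu$ surely and the claim is trivial.
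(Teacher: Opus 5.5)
Your argument is correct and is the classical Inaba et al.\ proof that the paper cites without reproducing: the identity $\cost(Q,c)=\cost(Q,\mu)+|Q|\,\|c-\mu\|_2^2$, the variance computation $\E\|\mu(T)-\mu\|_2^2=\cost(Q,\mu)/(s|Q|)$, and Markov's inequality with $s\eps=3$. The same computation goes through verbatim when sampling by normalized weights, which is the form in which the lemma is used in the Euclidean lifting algorithm.
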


Given a solution $S$ for $\Omega$, the lifting algorithm clusters the points of $\Omega$ according to $S$. For each cluster, it samples $3/\eps$ points according to the normalized weights and replaces the corresponding center by their mean. Since $\Omega \subseteq \Omega'$, every new center lies in $\mathcal{M}$.

\begin{algorithm}[t]
\caption{\textsc{Lifting} for Euclidean Space}
\label{alg:euclideanlifting}
\textbf{Input:} Weighted coreset $\Omega$ and a set of $k$ centers $S \subseteq \R^d$.\\
\textbf{Output:} Lifted solution $S'$
\begin{algorithmic}[1]
    \STATE Compute the clusters $(C_1,\dots,C_k)$ induced by $S$ on $\Omega$.
    \STATE For each cluster $C_i$, compute its weighted mean $\mu_i$.
    \FOR{$i=1$ to $k$}
        \STATE Repeat $t:=\log_3(3k)$ times:
        \STATE \quad Sample $3/\eps$ points from $C_i$ with replacement according to the normalized weights in $C_i$, and let $\widehat{\mu}$ be their mean.
        \STATE Choose the sampled mean $\widehat{\mu}$ minimizing $\|\mu_i-\widehat{\mu}\|^2$, and set $s'_i:=\widehat{\mu}$.
    \ENDFOR
    \STATE \textbf{return} $S'=(s'_1,\dots,s'_k)$
\end{algorithmic}
\end{algorithm}

\begin{lemma}
\label{lem:lifting}
With probability at least $2/3$, Algorithm~\ref{alg:euclideanlifting} maps any solution $S$ to a solution $S' \in \mathcal{F}$ such that
\[
\cost(\Omega,S') \le (1+\eps)\cost(\Omega,S).
\]
\end{lemma}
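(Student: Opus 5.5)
The plan is to reduce the statement to the per-cluster guarantee of \Cref{lem:inaba-short}, applied to weighted clusters, and then combine the clusters via the bias--variance identity and a union bound.

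\textbf{Step 1: the weighted subset-mean lemma.} Fix a cluster $C_i$ induced by $S$ on $\Omega$, with weighted mean $\mu_i$. Sampling a point of $C_i$ with probability proportional to its weight gives a random vector $X$ with $\E[X]=\mu_i$ and
\[
\E\|X-\mu_i\|^2=\cost(C_i,\mu_i)/w(C_i),
\]
where $w(C_i)$ is the total weight of $C_i$. The proof of \Cref{lem:inaba-short} only uses these two moments. For the mean $\widehat\mu$ of $3/\eps$ i.i.d.\ draws we therefore get
\[
\E\|\widehat\mu-\mu_i\|^2=\frac{\eps}{3}\cdot\frac{\cost(C_i,\mu_i)}{w(C_i)}.
\]
By Markov's inequality, with probability at least $2/3$ we have $\|\widehat\mu-\mu_i\|^2\le \eps\,\cost(C_i,\mu_i)/w(C_i)$. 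Combined with the identity
\[
\cost(C_i,x)=\cost(C_i,\mu_i)+w(C_i)\|x-\mu_i\|^2
\]
for every point $x$, this gives $\cost(C_i,\widehat\mu)\le(1+\eps)\cost(C_i,\mu_i)$. So the weighted version of \Cref{lem:inaba-short} holds as stated.

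\textbf{Step 2: amplification and the selection rule.} Each of the $t=\log_3(3k)$ repetitions succeeds independently with probability at least $2/3$, so all of them fail with probability at most $3^{-t}=1/(3k)$. The algorithm keeps the sample mean minimizing $\|\mu_i-\widehat\mu\|^2$. By the identity above, this is exactly the candidate minimizing $\cost(C_i,\widehat\mu)$, so if any repetition succeeds the chosen $s_i'$ also satisfies $\cost(C_i,s_i')\le(1+\eps)\cost(C_i,\mu_i)$. Note that the algorithm can compute $\mu_i$ exactly, so the selection needs no further estimation. A union bound over the $k$ clusters shows that all clusters succeed simultaneously with probability at least $1-1/3=2/3$.

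\textbf{Step 3: assembling the bound.} Assigning each point of $C_i$ to $s_i'$ (possibly suboptimally) and using that the mean minimizes the squared cost of a cluster, we get
\[
\cost(\Omega,S')\le\sum_i\cost(C_i,s_i')\le(1+\eps)\sum_i\cost(C_i,\mu_i)\le(1+\eps)\sum_i\cost(C_i,s_i)=(1+\eps)\cost(\Omega,S).
\]
For membership in $\mathcal F$, each $s_i'$ is the mean of a multiset of $3/\eps$ points of $\Omega\subseteq\Omega'$, so $s_i'\in\mathcal M$. Two small issues remain. First, if some cluster $C_i$ is empty, we choose $s_i'$ to be an arbitrary point of $\mathcal M$; this cannot increase the cost. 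Second, if two centers coincide, $S'$ has fewer than $k$ distinct centers; we pad it with further points of $\mathcal M$, which again only lowers the cost, so that $S'\in\mathcal F$.

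The main obstacle is Step 1. \Cref{lem:inaba-short} is stated for uniform sampling from a finite set, while $\Omega$ carries arbitrary real weights. I would redo the short variance argument directly for weighted sampling as above, rather than trying to reduce to the uniform statement by replicating points.
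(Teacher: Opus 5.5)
Your proof is correct and follows the paper's argument. The steps match: the per-cluster subset-mean guarantee, amplification over $t=\log_3(3k)$ repetitions (keeping the mean closest to $\mu_i$ is the same as keeping the one of minimum cluster cost), a union bound over the $k$ clusters, and membership in $\mathcal F$ via $\Omega\subseteq\Omega'$. Your extra points are clarifications of details the paper glosses over rather than a different route: the explicit weighted version of Lemma~\ref{lem:inaba-short} (the paper applies the uniform statement to weighted sampling without comment), the mean-optimality step in the final chain, and the handling of empty clusters and coinciding centers.
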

\begin{proof}
Fix a cluster $C_i$ induced by $S$ on $\Omega$, and let $\mu_i$ be its weighted mean. By Lemma~\ref{lem:inaba-short}, one sample of $3/\eps$ points from $C_i$ according to the normalized weights gives a mean $\widehat{\mu}$ with $\cost(C_i,\widehat{\mu}) \le (1+\eps)\cost(C_i,\mu_i)$ with probability at least $2/3$.

The algorithm repeats this experiment $t=\log_3(3k)$ times and keeps the sampled mean closest to $\mu_i$, equivalently the sampled mean with minimum cost on $C_i$. Thus cluster $C_i$ fails with probability at most $(1/3)^t$. A union bound over all clusters gives failure probability at most $k(1/3)^t \le 1/3$. Therefore, with probability at least $2/3$, the total cost increases by at most a $(1+\eps)$ factor.

Finally, each center of $S'$ is the mean of $3/\eps$ points from $\Omega$. Since $\Omega \subseteq \Omega'$, each center lies in $\mathcal{M}$; hence $S' \in \mathcal{F}$.
\end{proof}

\noindent
\textbf{Proof of Theorem~\ref{thm:euclidean}.}
The lifting algorithm runs in $\poly(k,1/\eps,d)$ time, since it operates only on the coreset $\Omega$ of size $\tilde{O}(k/\eps^3)$ and performs $O(\log k)$ trials of size $O(1/\eps)$ for each of the $k$ clusters.

Let $S$ be an $\alpha$-approximate solution for $\Omega$, and let
$S'=\textsc{Lifting}(S)$ be the solution returned by
Algorithm~\ref{alg:euclideanlifting}. Let $C^*$ be the optimum solution for $P$
whose cost is preserved by Lemma~\ref{lem:euclideancostpreservation}. Conditioning on the
events of Lemmas~\ref{lem:euclideancostpreservation} and~\ref{lem:lifting}, we have
\begin{align*}
\cost(P,S')
&\le \frac{1}{1-\eps}\cost(\Omega,S') \tag{via Lemma~\ref{lem:euclideancostpreservation}}\\
&\le \frac{1+\eps}{1-\eps}\cost(\Omega,S) \tag{via Lemma~\ref{lem:lifting}}\\
&\le \frac{\alpha(1+\eps)}{1-\eps}\min_C \cost(\Omega,C) \tag{since $S$ is $\alpha$-approximate for $\Omega$}\\
&\le \frac{\alpha(1+\eps)}{1-\eps}\cost(\Omega,C^*)\\
&\le \frac{\alpha(1+\eps)^2}{1-\eps}\cost(P,C^*) \tag{via Lemma~\ref{lem:euclideancostpreservation}}.
\end{align*}
Thus $S'$ is an $\alpha(1+O(\eps))$-approximate solution for $P$. Running the construction with a sufficiently small constant multiple of $\eps$ gives the claimed $\alpha(1+\eps)$ guarantee.
\qed
\section{Approximation Preserving Coresets for Arbitrary Metrics}
\label{sec:finite-metric}

In this section, we construct $(4+\eps)$-approximation preserving coresets for $k$-means in arbitrary finite metrics.

\begin{theorem}
\label{thm:finite-metric}
For any finite metric space $\mathcal{M}=(X,d)$ and any point set $P\subseteq X$, there exists a $(4+\eps)$-approximation preserving coreset of size $\tilde{O}(k/\eps^2)$.
\end{theorem}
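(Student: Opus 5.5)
Our plan is to run the witness-set template of the technical overview, now with a witness family whose size does not depend on $n$. The ingredients are a one-shot sensitivity sample $\Omega$ and the concentration bound of Theorem~\ref{thm:sensitivity}.

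\textbf{Construction.} We construct $\Omega \leftarrow \SensitivitySample(\mathcal{M},P,k,m)$ with $m = C\eps^{-2}(k\log k + \log N)$. We then show that a witness family $\mathcal{F}$ with $\log N = \tilde O(k)$ suffices. We take $\mathcal{F}$ to be all $k$-subsets of $\Omega$ itself, i.e.\ solutions whose centers are coreset points. The obstacle is that $\mathcal{F}$ depends on the randomness of $\Omega$.

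\textbf{Decoupling the family from the sample.} We fix this by a symmetrization/ghost-sample trick. Draw $\Omega$ as $m$ i.i.d.\ samples and view the candidate centers as coming from an independent copy $\Omega_0$ of $O(k\eps^{-2}\log k)$ samples. Since $\Omega_0$ has $\tilde O(k\eps^{-2})$ points, the family $\mathcal{F}_0 = \binom{\Omega_0}{k}$ has $\log N = \tilde O(k\log(k/\eps))$. Theorem~\ref{thm:sensitivity}, applied to $\Omega$ conditioned on $\Omega_0$, then gives $(1\pm\eps)$ preservation on $\mathcal{F}_0$ with size $\tilde O(k\eps^{-2})$. Both $\Omega_0$ and $\Omega$ are shipped together; their union is still of size $\tilde O(k/\eps^2)$. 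We also add the optimum $C^*$ (or a near-optimal discrete proxy) to the preserved family, exactly as in Lemma~\ref{lem:euclideancostpreservation}.

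\textbf{Mapping property with factor $4$.} Given any solution $S$ (centers in $X$), we map each $s\in S$ to the point of $\Omega_0$ nearest to $s$; call the result $S'$. For a point $p$ served by $s$, let $q=\pi(s)$. The triangle inequality gives $d(p,q)\le d(p,s)+d(s,q)$. We need $d(s,q)$ to be bounded by $d(s,\cdot)$ to the points of its cluster in a weighted-average sense. Specifically, $d(s,q)\le d(s,p')$ for every $p'\in\Omega_0$ in the cluster of $s$. With squared distances, $(a+b)^2\le 2a^2+2b^2$ and averaging gives $\cost(\cdot,S')\le 4\cost(\cdot,S)$ up to $(1+\eps)$. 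This only works if each cluster of $S$ on $P$ that is costly contains a point of $\Omega_0$ close to its "typical" distance. So the lifted guarantee is really stated on $P$: we need $\cost(P,S')\le(4+\eps)\cost(P,S)$ for the \emph{optimal} $S$ on $\Omega$, transferred via preserved costs.

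\textbf{Main obstacle.} The hardest step will be proving that $\Omega_0$ hits every relevant cluster. Sensitivity sampling with the $\frac{1}{k|C_j|}$ and $\Delta_p$ terms should guarantee that every cluster of the constant-factor solution $A$ gets hit. We expect to route through $A$: snap $s$ to the sampled point nearest to the $A$-cluster mass. We would first try charging $d(s,q)^2$ to $\cost(C_j,A)/|C_j|+d(s,a_j)^2$-type quantities, and bound the loss by an $\eps$ fraction using the $\eps^{-2}$ oversampling. If this fails, we fall back to defining $\mathcal{F}$ via nearest sample points to each $a_j$ combined with $O(\log(k/\eps))$ distance scales. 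This keeps $\log N=\tilde O(k)$ while giving the $4+\eps$ bound through the same triangle-inequality computation.
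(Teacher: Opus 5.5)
\textbf{The mapping step has a genuine gap, and your decoupling is what creates it.} Your cost-preservation step is sound: conditioning on an independent $\Omega_0$ does make $\binom{\Omega_0}{k}$ a fixed family for Theorem~\ref{thm:sensitivity}.

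Lemma~\ref{lem:centersfromcoreset} gives factor $4$ deterministically only because the snap targets lie in the same set $Q$ on which cost is measured. For $q\in Q$ served by $s$, one has $d(s,\pi(s))\le d(s,q)$. You measure cost on $\Omega$ but snap to $\Omega_0$, so this inequality holds only for $q\in\Omega_0$. You are then left needing a ``hitting'' property of $\Omega_0$ that holds simultaneously for every $\alpha$-approximate $S\subseteq X$, where $S$ is chosen after the sample is seen.

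Guarantees of the Event-$E$ type only concern the $k$ clusters of the fixed solution $A$, not the clusters of an adversarial $S$. A union bound over all such $S$ costs $k\log|X|$, which is exactly the dependence an APC must avoid. Your fallback (nearest sample points to the $a_j$ plus distance scales) is not a construction and gives no evident route to $4+\eps$.

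The reformulation ``$\cost(P,S')\le(4+\eps)\cost(P,S)$ for the optimal $S$ on $\Omega$'' does not rescue this. Since $S$ lies in no preserved family, $\cost(P,S)$ is not tied to $\cost(\Omega,S)$ or to $\alpha\,\OPT(P)$. Also, an APC must handle every $\alpha$-approximate $S$, not just the optimum. What you need is $\cost(\Omega,S')\le 4\cost(\Omega,S)$ on the coreset itself, with $S'$ in the witness family.

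\textbf{The missing idea is to snap to the coreset itself and remove the dependence by \emph{nesting} rather than by an independent sample.} The paper sets $\Omega_0=P$ and draws $\Omega_i$ by sensitivity sampling from $\Omega_{i-1}$. Each $\Omega_i$ need only preserve $\binom{\Omega_{i-1}}{k}\cup\{C^*\}$. This family is fixed once $\Omega_{i-1}$ is drawn, and $\log N_i\le k\log m_{i-1}+O(1)$.

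Because the supports are nested, every $k$-subset of the final $\Omega=\Omega_t$ belongs to every earlier family, so all of $\binom{\Omega}{k}$ is preserved. Lemma~\ref{lem:centersfromcoreset} with $Q=\Omega$, combined with the inequality chain from Theorem~\ref{thm:euclidean}, then gives $4+O(\eps)$.

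A single round from $P$ would pay $\log N=k\log n$. The paper instead iterates with sizes $m_i=\Gamma L_i^2$, where $L_i=\max\{\log^{(i)}n,B\}$, and accuracies $\eps_i=\eps L_i^{-1/2}$, so that $\sum_i\eps_i=O(\eps)$. This brings the size down to $\Gamma B^2=\tilde O(k\eps^{-2})$.
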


We prove the theorem in two steps. First, we describe the APC construction which involves iteratively using sensitivity sampling to compress the input. Then we describe a simple lifting procedure that maps any solution computed on the coreset to a solution for the original instance.

\paragraph{APC Construction.}
Strong coresets for finite metrics have size $O(k\eps^{-2}\log n)$, where $n=|P|$. To remove the dependence on $n$, we iteratively downsample. Starting from $\Omega_0=P$, the coreset $\Omega_i$ is sampled from $\Omega_{i-1}$ and is required to preserve the costs of all solutions whose centers lie in the support of $\Omega_{i-1}$. As the support size decreases, the number of such candidate solutions also decreases, so the next coreset can be smaller. Repeating this procedure yields a coreset of size $\tilde{O}(k\eps^{-2})$. This is similar to the iterative size reduction idea of~\cite{braverman2021coresets}.

For an integer $i \geq 0$, define the $i$-fold iterated logarithm by $\log^{(0)}(n)=n$ and $\log^{(i)}(n)=\log(\log^{(i-1)}(n))$ for $i>0$.

\begin{algorithm}[H]
\caption{\textsc{Approximation Preserving Coreset} for Finite Metric Spaces}
\label{alg:iterative-finite}
\textbf{Input:} A finite metric space $(X,d)$, point set $P\subseteq X$, integer $k$, accuracy $\eps\in(0,1)$.
\begin{algorithmic}[1]
    \STATE $\Gamma \leftarrow C\cdot k\eps^{-2}\log(k/\eps)$, where $C>0$ is a sufficiently large constant.
    \STATE Initialize $(\Omega_0,w_0)\leftarrow(P,\mathbf{1})$.
    \STATE Let $B>1$ be a sufficiently large absolute constant.
    \STATE $t\leftarrow \min\{i\ge 1:\ilog{i}{n}\le B\}$, where $n:=|P|$.
    \FOR{$i=1$ \textbf{to} $t$}
        \STATE $L_i\leftarrow \max\{\ilog{i}{n},B\}$.
        \STATE $m_i\leftarrow \Gamma L_i^2$.
        \STATE $(\Omega_i,w_i)\leftarrow \SensitivitySample\bigl((X,d),(\Omega_{i-1},w_{i-1}),k,m_i\bigr)$.
    \ENDFOR
    \STATE $(\Omega,w)\leftarrow(\Omega_t,w_t)$.
\end{algorithmic}
\textbf{Output:} The weighted set $(\Omega,w)$.
\end{algorithm}

The main property of the construction is the following.

\begin{lemma}
\label{lem:costpreservation}
The coreset $\Omega$ output by Algorithm~\ref{alg:iterative-finite} has size $\tilde{O}(k\eps^{-2})$. Moreover, with constant probability, it satisfies:
\begin{enumerate}
    \item (\textbf{Cost Preservation on $\Omega$-centers})
    For every $S\subseteq \Omega$ with $|S|=k$,
    \[
    \cost(\Omega,S)\in(1\pm\eps)\cost(P,S).
    \]

    \item (\textbf{Cost Preservation for an Optimal Solution})
    There exists an optimal solution $C^*\subseteq X$ for $P$ such that
    \[
    \cost(\Omega,C^*)\in(1\pm\eps)\cost(P,C^*).
    \]
\end{enumerate}
\end{lemma}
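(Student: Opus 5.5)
The plan is to analyze the iterations one at a time with Theorem~\ref{thm:sensitivity}, applied at stage $i$ with accuracy $\eps_i$ and failure probability $\delta_i$. The per-stage errors are then combined multiplicatively and the failure probabilities by a union bound.

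\textbf{Size.} The final coreset has $m_t=\Gamma L_t^2=\Gamma B^2=O(k\eps^{-2}\log(k/\eps))$ points, since $L_t=B$ by the choice of $t$. Hence $|\Omega|=\tilde O(k\eps^{-2})$.

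\textbf{Per-stage family.} At stage $i$, the input $\Omega_{i-1}$ has support size at most $n_{i-1}:=m_{i-1}$ (with $n_0=n$). We require $\Omega_i$ to preserve, relative to $\Omega_{i-1}$, the costs of the following family $\mathcal S_i$:
\begin{itemize}
\item all $k$-subsets of $\operatorname{supp}(\Omega_{i-1})$;
\item the fixed optimum $C^*$ for $P$.
\end{itemize}
This gives $|\mathcal S_i|\le n_{i-1}^k+1$. The crucial point is that $\mathcal S_i$ is determined before the stage-$i$ sampling, so the theorem applies conditionally on $\Omega_{i-1}$. We have $\log|\mathcal S_i| = O(k\log n_{i-1})$. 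For $i=1$ this is $k\log n=k\,2^{L_1}$, which is not yet the form we need, so the first stage must be handled separately (see the obstacle paragraph). For $i\ge2$, $n_{i-1}=\Gamma L_{i-1}^2$, so
\[
\log|\mathcal S_i| = O\bigl(k(\log\Gamma+\log L_{i-1})\bigr)=O\bigl(k(\log(k/\eps)+L_i)\bigr).
\]

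\textbf{Error schedule.} Choose $\eps_i=\eps/L_i$ and $\delta_i=1/(10L_i)$. Theorem~\ref{thm:sensitivity} then needs
\[
m_i\gtrsim \eps^{-2}L_i^2\bigl(k\log(kL_i)+k\log(k/\eps)+kL_i\bigr).
\]
The term $kL_i$ inside the parentheses is the obstacle: it gives $L_i^3$ rather than $L_i^2$. The fix is to take $\eps_i=\eps/\sqrt{L_i}\cdot$(a decaying factor), as follows. Use $\eps_i=\eps\,c/L_i^{1/2}$ only while $\sum_i\eps_i$ stays small. Since $L_{i}\ge 2^{L_{i+1}}$, the values $L_i$ decrease super-exponentially, so $\sum_i 1/\sqrt{L_i}=O(1/\sqrt B)$ is dominated by the last terms. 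With $B$ large this gives $\sum\eps_i\le\eps$ up to a constant, which is acceptable after rescaling $\eps$. The requirement becomes
\[
m_i\gtrsim \eps^{-2}L_i\,k\bigl(L_i+\log(k/\eps)+\log L_i\bigr)\le \Gamma L_i^2,
\]
which holds by the choice of $\Gamma$. Similarly $\sum_i\delta_i\le\sum 1/(10L_i)\le 1/5$. For stage $1$, $\log n$ equals $L_1$ exactly, so $\log|\mathcal S_1|=O(kL_1)$ and the same bound applies. I expect this bookkeeping (balancing $\eps_i$ against the $\log N$ term) to be the main obstacle.

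\textbf{Chaining.} On the intersection of the good events, fix any $S\subseteq\Omega$ with $|S|=k$. Since the supports are nested, $\Omega=\Omega_t\subseteq\Omega_{t-1}\subseteq\dots\subseteq P$, so $S$ lies in every $\mathcal S_i$. Telescoping gives
\[
\cost(\Omega,S)\in\prod_i(1\pm\eps_i)\,\cost(P,S)\subseteq(1\pm O(\eps))\cost(P,S),
\]
because $\prod(1+\eps_i)\le e^{\sum\eps_i}$. The same chain applies to $C^*$, which belongs to every $\mathcal S_i$ by construction. Rescaling $\eps$ then yields both properties, with constant success probability $\ge 4/5$.
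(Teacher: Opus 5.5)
Your proposal is correct and follows the paper's proof essentially step for step. It uses the same per-stage families (all $k$-subsets of the support of $\Omega_{i-1}$ plus a fixed optimum) and the same accuracy schedule $\eps_i=\Theta(\eps L_i^{-1/2})$, giving $\eps_i^{-2}\log N_i=O(k\eps^{-2}L_i(L_i+\log(k/\eps)))=O(\Gamma L_i^2)$; it then applies a union bound over stages and chains the errors multiplicatively through the nested supports. Your $\delta_i=1/(10L_i)$ in place of the paper's $1/(4m_{i-1})$ is an inessential difference.

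There are two small slips. The early remark $k\log n=k\,2^{L_1}$ is wrong, but you later correctly use $\log n=L_1$. The claim $L_i\ge 2^{L_{i+1}}$ fails at $i=t-1$ because $L_t=B$ is a floor, yet $\sum_i L_i^{-1/2}=O(B^{-1/2})$ still holds.
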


\begin{proof}
Let $S^*\subseteq X$ be a fixed optimal solution for $P$. For each $i\in[t]$, define
\[
L_i:=\max\{\ilog{i}{n},B\}, \quad
\eps_i:=\eps L_i^{-1/2}, \quad
\delta_i:=\frac{1}{4m_{i-1}},
\]
where $m_0:=n$.

Fix the randomness of $\Omega_1,\ldots,\Omega_{i-1}$. At step $i$, we need to preserve the costs of all $k$-center sets contained in $\Omega_{i-1}$, together with $S^*$. Thus the number of relevant solutions is at most
\[
N_i\le |\Omega_{i-1}|^k+1\le 2m_{i-1}^k.
\]
By Theorem~\ref{thm:sensitivity}, it suffices to sample
\[
\Omega\!\left(\eps_i^{-2}\left(k\log k+\log N_i+k\log(1/\delta_i)\right)\right)
\]
points. Since $\log m_{i-1}=O(\log(k/\eps)+L_i)$, we have
\begin{align*}
 &\eps_i^{-2}\left(k\log k+\log N_i+k\log(1/\delta_i)\right)\\
=&
O\!\left(k\eps^{-2}L_i(\log(k/\eps)+L_i)\right)
=
O(\Gamma L_i^2)
=
O(m_i),
\end{align*}
for a sufficiently large choice of the constant $C$ in $\Gamma$. Therefore, with probability at least $1-\delta_i$, $\Omega_i$ preserves all relevant costs up to a factor $(1\pm\eps_i)$.

By a union bound over the iterations, all stepwise preservation events hold simultaneously with constant probability. Condition on this event. If $S\subseteq \Omega=\Omega_t$ with $|S|=k$, then $S\subseteq \Omega_{i-1}$ for every $i$, so the stepwise guarantees apply to $S$ throughout the sequence. Hence
\[
\cost(\Omega,S)
\in
\left(\prod_{i=1}^t(1\pm\eps_i)\right)\cost(P,S).
\]
The same argument applies to $S^*$. Since the iterated logarithms decrease rapidly and $B$ is a sufficiently large constant, $\sum_{i=1}^t\eps_i=O(\eps)$, so the product error is $1\pm O(\eps)$.

Finally, by the stopping rule, $L_t=B$, and therefore
\[
|\Omega|=m_t=\Gamma B^2=\tilde{O}(k\eps^{-2}).
\]
Rescaling $\eps$ by constant factors completes the proof. \qedhere
\end{proof}

\paragraph{Lifting Algorithm.}
It remains to explain how to convert an $\alpha$-approximate solution for $\Omega$ into a solution which is $(4+\eps) \alpha$-approximate solution for $P$. The lifting procedure is simple: each center is moved to its nearest point in $\Omega$. The following standard lemma bounds the cost increase.

\begin{lemma}
\label{lem:centersfromcoreset}
Let $(X,d)$ be a metric space and let $Q\subseteq X$. For any set of $k$ centers $S=\{s_1,\ldots,s_k\}\subseteq X$, let $S'=\{s'_1,\ldots,s'_k\}\subseteq Q$, where each $s'_i$ is the point of $Q$ closest to $s_i$. Then
\[
\cost(Q,S')\le 4\cost(Q,S).
\]
\end{lemma}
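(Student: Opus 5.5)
The plan is a pointwise triangle-inequality argument, then summing with weights. Write $\cost(Q,S)=\sum_{q\in Q} w(q)\, d(q,S)^2$, which covers both weighted and unweighted $Q$. Fix $q\in Q$ and let $s_i$ be its closest center in $S$, so that $d(q,S)=d(q,s_i)$. It suffices to show
\[
d(q,S')^2 \le d(q,s'_i)^2 \le 4\, d(q,s_i)^2 .
\]
Multiplying by $w(q)\ge 0$ and summing over $q\in Q$ then gives the claim.

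The key step is bounding $d(q,s'_i)$. By the triangle inequality, $d(q,s'_i)\le d(q,s_i)+d(s_i,s'_i)$. Since $q\in Q$ and $s'_i$ is the point of $Q$ closest to $s_i$, we have $d(s_i,s'_i)\le d(s_i,q)$. Hence $d(q,s'_i)\le 2\,d(q,s_i)$, and squaring yields the factor $4$.

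Note that $q$ need not be served by $s'_i$ in $S'$. This is handled by the first inequality above: $d(q,S')\le d(q,s'_i)$. The argument does not depend on whether several $s_i$ map to the same $s'_j$, or on $S'$ having fewer than $k$ distinct points, since only an upper bound on $d(q,S')$ is used.

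There is no real obstacle. The one point to get right is that the comparison $d(s_i,s'_i)\le d(s_i,q)$ uses $q$ itself as a candidate in $Q$. This is exactly why the lemma is stated for the cost of $Q$, the same set the centers are snapped into, rather than for $P$.
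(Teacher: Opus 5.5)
Your proof is correct and follows essentially the same route as the paper: the pointwise bound $d(q,s'_i)\le d(q,s_i)+d(s_i,s'_i)\le 2d(q,s_i)$, which uses $q\in Q$ as a competitor for $s'_i$, followed by squaring and summing. You also spell out two steps the paper leaves implicit, namely $d(q,S')\le d(q,s'_i)$ and the handling of nonnegative weights.
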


\begin{proof}
For any $q\in Q$, let $s_i$ be the nearest center to $q$ in $S$. Since $s'_i$ is the closest point of $Q$ to $s_i$, we have
\[
d(q,s'_i)\le d(q,s_i)+d(s_i,s'_i)\le 2d(q,s_i).
\]
Squaring and summing over $q\in Q$ gives the claim.
\end{proof}

\begin{algorithm}
\caption{\textsc{Lifting} for Finite Metrics}
\label{alg:finitemetricmapping}
\textbf{Input:} Weighted coreset $(\Omega,w)$ and centers $\widehat S=\{s_1,\ldots,s_k\}\subseteq X$.
\begin{algorithmic}[1]
    \STATE For each $i$, let $s'_i$ be the point in $\Omega$ closest to $s_i$.
\end{algorithmic}
\textbf{Output:} $\widehat S'=\{s'_1,\ldots,s'_k\}\subseteq\Omega$.
\end{algorithm}

\noindent
The proof of Theorem~\ref{thm:finite-metric} follows the same inequality chain as Theorem~\ref{thm:euclidean}. The only difference is the lifting step: by Lemma~\ref{lem:centersfromcoreset}, mapping each center to its nearest point in $\Omega$ increases the coreset cost by at most a factor of $4$. Lemma~\ref{lem:costpreservation} then transfers the lifted solution and the optimum solution between $\Omega$ and $P$, giving a $(4+O(\eps))$-APC. Rescaling $\eps$ completes the proof.

\subsection{FPT Algorithm}
\label{sec:fpt}

We record an algorithmic consequence of Theorem~\ref{thm:finite-metric}.

\begin{theorem}
\label{lem:fpt}
For finite metric $k$-means, there is a randomized $(4+\eps)$-approximation algorithm with running time near-linear in the input size plus $(\eps^{-1}\log k )^{O(k)}$.
\end{theorem}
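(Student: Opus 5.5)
The plan is to build the APC of Theorem~\ref{thm:finite-metric} with Algorithm~\ref{alg:iterative-finite}, solve $k$-means on it exactly by brute force over center sets inside $\Omega$, and lift the result to $P$. The structure of the proof of Theorem~\ref{thm:finite-metric} shows that the brute-force optimum over $k$-subsets of $\Omega$ is already a $(4+\eps)$-approximation for $P$. Concretely:
\begin{enumerate}
\item Run \SensitivitySample{} iteratively to get $\Omega$ with $|\Omega| = m_t = \Gamma B^2 = O(k\eps^{-2}\log(k/\eps))$.
\item Enumerate all $\binom{|\Omega|}{k}$ center sets $S\subseteq\Omega$ and evaluate $\cost(\Omega,S)$ for each.
\item Output the minimizer $\widehat S$.
\end{enumerate}

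For correctness, condition on the event of Lemma~\ref{lem:costpreservation}. Let $C^*$ be the optimal solution whose cost is preserved, and let $S'$ be obtained from $C^*$ by moving each center to its nearest point of $\Omega$. Lemma~\ref{lem:centersfromcoreset} gives $\cost(\Omega,S')\le 4\cost(\Omega,C^*)$. Since $\widehat S$ minimizes over subsets of $\Omega$, and both $\widehat S$ and $S'$ lie in $\Omega$, item 1 of Lemma~\ref{lem:costpreservation} applies to both. This yields
\[
\cost(P,\widehat S)\le \tfrac{1}{1-\eps}\cost(\Omega,\widehat S)\le \tfrac{1}{1-\eps}\cost(\Omega,S')\le \tfrac{4}{1-\eps}\cost(\Omega,C^*)\le \tfrac{4(1+\eps)}{1-\eps}\OPT(P).
\]
Rescaling $\eps$ gives the factor $4+\eps$. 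This is simply the case $\alpha=1$ of the APC guarantee, so no new lifting step is needed, and $\widehat S\subseteq \Omega\subseteq P$ is a valid solution.

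For the running time, the brute-force step evaluates $\binom{m_t}{k}\le m_t^k = (k\eps^{-2}\log(k/\eps))^{O(k)}$ candidate sets. Each evaluation costs $O(m_t k)$, which is absorbed into the exponent. Since $k^{O(k)}$ is not of the form $(\eps^{-1}\log k)^{O(k)}$, I would reduce the candidate set first. Compute the constant-factor solution $A$ already used in sensitivity sampling. Restrict each center's candidates to the points of $\Omega$ in its own cluster of $A$, where the clusters have sizes $m_j$ with $\sum_j m_j=m_t$. Then guess one center per cluster, giving $\prod_j m_j\le (m_t/k)^k=(\eps^{-2}\log(k/\eps))^{O(k)}=(\eps^{-1}\log k)^{O(k)}$ choices. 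The losses from this restriction are absorbed into the constant inside $O(k)$.

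The preprocessing must be near-linear. The first call to \SensitivitySample{} needs an $O(1)$-approximate $k$-means solution on $P$, computable in $\tilde O(nk)$ time, for example via $k$-means++ followed by a constant-factor bicriteria-to-$k$ reduction, or via a known $O(nk)$ constant-factor algorithm. Computing the distribution $\mu$ and sampling then take $O(nk)$ time. All later rounds run on sets of size $m_{i-1}=\poly(k/\eps)\cdot \log^2 n$, costing $\tilde O(k\, m_{i-1})$, which is dominated by $O(nk)$ plus the FPT term. The constant success probability can be boosted by independent repetitions, keeping the cheapest solution as evaluated on $P$ in $O(nk)$ time per candidate.

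The main obstacle is the enumeration bound: reaching exactly $(\eps^{-1}\log k)^{O(k)}$ rather than $(k/\eps)^{O(k)}$. I would first try the cluster-partition guessing argument above. If that fails, I would accept $(k\eps^{-1})^{O(k)}$ and write it as $2^{O(k\log k)}\eps^{-O(k)}$, matching the bound up to the form in which the statement is phrased.
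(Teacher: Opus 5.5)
Your correctness argument matches the paper's: take the best $k$-subset $\widehat S\subseteq\Omega$, compare it with the nearest-point rounding of $C^*$ via Lemma~\ref{lem:centersfromcoreset}, and transfer costs with both parts of Lemma~\ref{lem:costpreservation}. The gap is in the running time.

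Your reduction to ``one center among the coreset points of each cluster of $A$'' shrinks the solution space. What it loses is approximation ratio, not running time, so that loss cannot be ``absorbed into the constant inside $O(k)$''. Since $A$ is only a $c$-approximation for some constant $c$, nothing forces a good solution to put exactly one center in each $A$-cluster. For example, take $k=2$ on a line: $w$ points at $0$, $w$ points at $1$, and one point at $D$ with $(D-1)^2=w/c$ and $w>c$.
\begin{itemize}
\item The optimum is $\{0,1\}$, so $\OPT=w/c$.
\item $A=\{0,D\}$ costs $w=c\cdot\OPT$, and its first cluster contains all $2w$ points at $0$ and $1$.
\item Every solution with one input-point center per $A$-cluster costs $w=c\cdot\OPT$, which is worse than $(4+\eps)\OPT$ once $c>4+\eps$.
\end{itemize}
Letting each $A$-cluster receive a guessed number $k_j$ of centers would restore correctness. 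But by Vandermonde's identity, $\sum_{k_1+\dots+k_k=k}\prod_j\binom{m_j}{k_j}=\binom{m}{k}$, so that is just the full enumeration again. Your fallback bound $(k/\eps)^{O(k)}=2^{O(k\log k)}\eps^{-O(k)}$ does not prove the statement either, because $k^{\Theta(k)}$ is not of the form $(\log k)^{O(k)}$.

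None of this detour is needed. The missing observation is the elementary bound $\binom{m}{k}\le(em/k)^k$. Your estimate $\binom{m}{k}\le m^k$ discards the $k!\ge(k/e)^k$ that cancels the factor $k$ in $|\Omega|$. With $m=|\Omega|=\Gamma B^2=O(k\eps^{-2}\log(k/\eps))$, this gives $\binom{m}{k}\le\bigl(O(\eps^{-2}\log(k/\eps))\bigr)^k=(\eps^{-1}\log k)^{O(k)}$. The $\poly(m)=\poly(k/\eps)$ cost of evaluating each candidate on $\Omega$ is absorbed into the same expression. This is exactly how the paper obtains the stated bound. Your treatment of the near-linear preprocessing and of probability boosting is fine, and more explicit than the paper's.
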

\begin{proof}
Construct the coreset $\Omega$ from Algorithm~\ref{alg:iterative-finite}, and let
$m:=|\Omega|=\tilde{O}(k\eps^{-2})$. Enumerate all $k$-subsets of $\Omega$, and let
$\widehat S$ be the minimum-cost such solution on $\Omega$.

Let $C^*$ be an optimal solution for $P$ whose cost is preserved by
Lemma~\ref{lem:costpreservation}. By Lemma~\ref{lem:centersfromcoreset}, there exists a
set $S_\Omega\subseteq\Omega$ of $k$ centers such that
\[
\cost(\Omega,S_\Omega)
\le
4\min_{C\subseteq X,\ |C|=k}\cost(\Omega,C).
\]
Since $\widehat S$ is the best $k$-subset of $\Omega$, we have
\[
\cost(\Omega,\widehat S)
\le
\cost(\Omega,S_\Omega)
\le
4\cost(\Omega,C^*).
\]
Using Lemma~\ref{lem:costpreservation} for both $\widehat S\subseteq\Omega$ and $C^*$,
\begin{align*}
\cost(P,\widehat S)
&\le
\frac{1}{1-\eps}\cost(\Omega,\widehat S)
\le
\frac{4}{1-\eps}\cost(\Omega,C^*)\\
&\le
\frac{4(1+\eps)}{1-\eps}\cost(P,C^*).
\end{align*}
Thus $\widehat S$ is a $(4+O(\eps))$-approximation for $P$.

The enumeration takes
\[
O\left(\binom{m}{k}\poly(m)\right)
=
(\eps^{-1} \log k) ^{O(k)}.
\]
Adding the coreset construction time gives the claimed running time.
\end{proof}
 \subsection{Lower bound}

We prove a lower bound for APCs where the summary consists only of the retained input points. In this model, if two inputs induce the same retained subset, then no post-processing algorithm using only the coreset and the underlying metric space can distinguish them.

First we show a combinatorial packing lemma used in the lower bound.

\begin{lemma}
\label{lem:setconstruction}
For every fixed \(0<\tau<1\), there are constants \(\gamma>0\) and \(n_0\) such that, for all \(n\ge n_0\) and all \(r\le \tau n/4\), there exists a family
\(\mathcal{F}\subseteq \binom{[n]}{r}\) satisfying
\(
|\mathcal{F}| \ge \exp\!\left(\gamma r\log\frac nr\right),
\)
and for every pair of distinct sets \(S,T\in\mathcal F\),
\(
|S\cap T|\le \tau r.
\)
\end{lemma}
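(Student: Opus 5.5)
We prove the lemma with a standard Gilbert--Varshamov style argument: either greedy packing or a random construction with alterations. We take the greedy route because it needs no probabilistic bookkeeping. Process $\binom{[n]}{r}$ and repeatedly add any set whose intersection with every previously chosen set is at most $\tau r$, deleting all sets that intersect it in more than $\tau r$ elements. Each chosen set rules out at most
\[
D:=\sum_{j>\tau r}\binom{r}{j}\binom{n-r}{r-j}
\]
sets. The greedy family therefore has size at least $\binom{n}{r}/D$, and any two distinct chosen sets satisfy $|S\cap T|\le\tau r$ by construction.

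The core step is bounding the ratio $\binom{n}{r}/D$. Fix $S$ and let $T$ be uniform in $\binom{[n]}{r}$. Then $D/\binom nr=\Pr[|S\cap T|>\tau r]$, and $|S\cap T|$ is hypergeometric with mean $r^2/n\le \tau r/4$. The union bound over the choice of $j=\lceil\tau r\rceil$ common elements gives
\[
\Pr[|S\cap T|\ge j]\le \binom{r}{j}\left(\frac{r}{n-r}\right)^{j}.
\]
The bound follows because the probability that $j$ specified elements all land in $T$ is at most $(r/(n-r))^j$. Now use $\binom rj\le 2^r$, which is fine, and $r/(n-r)\le 2r/n$, valid since $r\le n/4$. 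This yields
\[
\Pr\le 2^r\,(2r/n)^{\tau r}=\exp\!\Bigl(-\tau r\log\tfrac{n}{2r}+r\log 2\Bigr).
\]

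The main obstacle is absorbing the $r\log 2$ term, and more generally the constants, into $\gamma r\log(n/r)$ when $n/r$ is only a constant. The hypothesis $r\le\tau n/4$ only guarantees $n/r\ge 4/\tau$, and $\tau\log(4/\tau)$ need not beat $\log 2$. To avoid this, I would use the sharper Chernoff-type tail for the hypergeometric distribution (Hoeffding's comparison with the binomial). With mean $\mu\le \tau r/4$ and threshold $\tau r=(1+\delta)\mu$, where $1+\delta\ge 4$, the tail is at most
\[
\exp\!\bigl(-\tau r\log\tfrac{\tau r}{e\mu}\bigr)\le\exp\!\bigl(-\tau r\log\tfrac{\tau n}{e r}\bigr).
\]
When $n/r\ge C_\tau$ for a large constant $C_\tau$, this is $\exp(-\Omega_\tau(r\log(n/r)))$, using $\log(\tau n/(er))\ge\tfrac12\log(n/r)$. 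In the remaining regime $4/\tau\le n/r\le C_\tau$ the target is only $\exp(\Omega_\tau(r))$. There I would get an exponentially small tail directly from the relative entropy bound $\exp(-r\,\mathrm{KL}(\tau\,\|\,r/n))$. This is positive since $r/n\le\tau/4<\tau$, and it exceeds a constant depending only on $\tau$ because $r/n$ ranges over a compact interval bounded away from $\tau$.

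Combining both regimes, we get $\Pr[|S\cap T|>\tau r]\le\exp(-\gamma r\log(n/r))$ for some $\gamma=\gamma(\tau)>0$. Hence the greedy family has size at least $\exp(\gamma r\log(n/r))$. The condition $n\ge n_0$ is used only to handle rounding of $\tau r$ and small cases such as $r=0$ or $1$. The case $r\ge1$ with $\tau r<1$ requires disjointness, and it is still covered because the tail is then $\Pr[|S\cap T|\ge1]\le r^2/(n-r)$, which suffices after adjusting $\gamma$.
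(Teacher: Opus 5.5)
Your proof is correct, and it differs from the paper's in the packing step. The paper picks $m=\exp(\gamma r\log(n/r))$ sets independently at random and union-bounds over all $\binom m2$ pairs. You instead use the greedy Gilbert--Varshamov argument, which yields a family of size $\binom nr/D=1/p$, where $p=\Pr[|S\cap T|>\tau r]$; your $D$ correctly counts $T=S$. The greedy route only needs $p\le\exp(-\gamma r\log(n/r))$, whereas the random construction needs $\binom m2 p<1$ and so loses a factor of two in $\gamma$. That loss is immaterial here, but your version is also deterministic.

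Both arguments rest on the same hypergeometric tail bound. The paper simply cites it, while you derive it, and you correctly note that the crude estimate $\binom rj\le 2^r$ fails when $n/r=\Theta(1/\tau)$. Your Chernoff bound $\exp(-\tau r\log(\tau n/(er)))$, obtained via Hoeffding's binomial comparison, holds at the real threshold $\tau r$. Since $n/r\ge 4/\tau>e/\tau$, it already gives $\exp(-\tau\log(4/e)\,r)$ in the bounded-ratio regime and also covers $\tau r<1$. So the KL step, the separate small-$r$ discussion, and the role of $n_0$ are all redundant.

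Your first union bound also works if you sharpen two estimates: use $\binom rj\le(er/j)^j$ instead of $2^r$, and $(r/n)^j$ instead of $(r/(n-r))^j$. With $j\ge\tau r$ this gives $(er/(\tau n))^{j}\le\exp(-\tau r\log(\tau n/(er)))$, the same estimate, without needing Hoeffding's comparison.
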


\begin{proof}
    The proof is by a standard application of the probabilistic method.
Let \(m=\exp\ (\gamma r\log(n/r))\),
where \(\gamma>0\) is a sufficiently small constant depending only on \(\tau\), and choose \(m\) sets independently and uniformly at random from \(\binom{[n]}{r}\). For a fixed set \(S\), the random variable \(|S\cap T|\), where \(T\) is uniformly distributed over \(\binom{[n]}{r}\), has a hypergeometric distribution with mean \(r^2/n\). Since \(r\le \tau n/4\), a standard hypergeometric tail bound \cite{hoeffding1963probability} gives
\[
\Pr[|S\cap T|>\tau r] \le \exp\ \left(-c r\log\frac nr\right)
\]
for some constant \(c>0\).

Therefore, by a union bound over all pairs of sampled sets, the probability that all pairs $S,T$ satisfy $|S\cap T| \le \tau r$ is at least $1 - {m \choose 2} \exp (-cr\log(n/r))$ which is positive for small enough $\gamma$.
\end{proof}

\begin{theorem}
    There exists a metric space $(X,d)$ with $N$ points and an input set $P \subseteq X$, such that any $\beta$-approximation preserving coreset for $P$ with $\beta < 4$ has size $\Omega({\log N/\log \log N})$. 
\end{theorem}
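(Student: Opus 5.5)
We argue by indistinguishability, in the model where the summary consists of retained input points. The goal is to build one metric $(X,d)$ and a single input $P$ together with many optimal solutions that look alike from any small subset $\Omega\subseteq P$.

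\textbf{Construction.} Take $P=[n]$ and apply Lemma~\ref{lem:setconstruction} with a small constant $\tau$ and $r$ to be chosen. This gives a family $\mathcal F$ of $r$-subsets of $P$ with pairwise intersections at most $\tau r$, of size $\exp(\gamma r\log(n/r))$. For each $F\in\mathcal F$, add a ``hub'' point $h_F$ to $X$ with $d(h_F,p)=1$ for $p\in F$ and $d(h_F,p)=2$ for $p\in P\setminus F$. All points of $P$ are at pairwise distance $2$, and any two hubs are at distance $2$. This is a metric, since all distances lie in $\{1,2\}$. Set $N=|X|=n+|\mathcal F|$.

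To use $k=1$, the cost structure must be adjusted so that the solution $h_F$ is good only for a particular input. I would therefore keep $X$ fixed and vary the input: for each $F\in\mathcal F$, let the input be $P_F=F$ with unit weights, possibly plus fixed background points. The optimum for $P_F$ is then $h_F$, with cost $r$. Any input point used as a center costs about $4(r-1)$. Any other hub $h_G$ costs $|F\cap G|\cdot 1+(r-|F\cap G|)\cdot 4\ge(4-3\tau)r$. So every solution other than $h_F$ is worse by a factor close to $4$.

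\textbf{Indistinguishability.} Suppose a $\beta$-APC with $\beta<4$ has size $s$. Each summary $\Omega_F\subseteq F$ is a set of at most $s$ retained points; weights can be discretized or ignored, since the adversary may present any inputs. If $s$ is small, the number of possible summaries is at most $n^{s}$. We choose parameters so that $n^s<|\mathcal F|$, i.e. $s\log n<\gamma r\log(n/r)$. Then by pigeonhole two distinct $F,G$ yield the same $\Omega$, and for a cleaner argument many such $F$ share the same $\Omega$.

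Feed the exact optimum solution $S$ on $\Omega$ to the post-processing algorithm, so $\alpha=1$. The output $S'$ is a deterministic function of $(\Omega,S,d)$, so it is the same for $F$ and $G$. It must be a $\beta$-approximation for both inputs. This is impossible: only $h_F$ is a $<(4-3\tau)$-approximation for $P_F$, and only $h_G$ is one for $P_G$. Choosing $\tau$ small relative to $4-\beta$ gives the contradiction. For randomized post-processing, apply the same argument to the most likely output.

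\textbf{Parameters (the main obstacle).} The size bound must be expressed in $N$. We need $|\mathcal F|$ exponential in $r\log(n/r)$ while $N=n+|\mathcal F|$, and the summary count is at most $n^s$. Taking $r=\Theta(\log n)$ gives $|\mathcal F|=n^{\Theta(\log n/\log\log n)}$-ish, so $\log N=\Theta(\log^2 n/\log\log n)$. Then $s$ must satisfy $s\log n\gtrsim \log N$, so $s\gtrsim\log n/\log\log n\approx\sqrt{\log N}$, which is weaker than the target.

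To reach $\Omega(\log N/\log\log N)$, I would instead take $n$ small relative to $N$ and $r$ a constant fraction of $n$, for example $r=\tau n/4$. Then $|\mathcal F|=e^{\Theta(n)}$ and $\log N=\Theta(n)$. Pigeonhole requires $s\log n<\gamma' n$, giving $s=\Omega(n/\log n)=\Omega(\log N/\log\log N)$.

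The delicate step is this balancing, together with verifying that the optimum gap survives the variation of the input across $F\in\mathcal F$: the witness $h_F$ must be uniquely near-optimal while every other hub stays within the intersection bound.
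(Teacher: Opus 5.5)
Your argument is correct and is essentially the paper's proof: the same metric, with candidate centers at distance $1$ from the points of a packing set and $2$ from all others, $k=1$, inputs $P_F=F$, and pigeonhole over retained subsets; like the paper, you treat the summary as unweighted, which is a modeling assumption rather than something that holds without loss of generality. The only real difference is the parameter choice: the paper takes $r=\sqrt n$, so $\log N=\Theta(\sqrt n\log n)$ and the bound $\Omega(r)$ equals $\Omega(\log N/\log\log N)$, while you take $r=\Theta(n)$ and lose the $\log\log N$ factor only through the crude $n^{s}$ count---with $\sum_{j\le s}\binom{n}{j}\le(en/s)^s$ your choice actually gives $\Omega(\log N)$.
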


\begin{proof}
    We prove the result for \(k=1\).
    We construct the metric space $\mathcal{M}=(X,d)$ containing a set of possible input points $A = \{p_1, p_2, \dots, p_n\}$.

Fix a sufficiently small constant \(\tau>0\). We use Lemma \ref{lem:setconstruction} with \(r = \sqrt{n}\) and get a set family \(\mathcal{F}\) of size
    \(
    |\mathcal F| \ge \exp(\Omega(r\log(n/r)))=\exp(\Omega(\sqrt n\log n))
    \)
    such that for any \(S,T \in \mathcal{F}\), we have \(|S \cap T| \le \tau r\).
    
    The metric instance is as follows. For every set \(S\in\mathcal F\), we add a candidate center \(c_S\). Thus, we have
\(
X=A\cup \{c_S:S\in\mathcal F\}.
\)
We define the relevant distances by
\[
d(p_i,c_S)=
\begin{cases}
1 & \text{if } i\in S,\\
2 & \text{if } i\notin S.
\end{cases}
\]
Moreover, we set \(d(p_i,p_j)=2\) for all \(i\neq j\), and \(d(c_S,c_T)=2\) for all distinct \(S,T\in\mathcal F\). These distances define a metric as all distances lie in \(\{1,2\}\).

Assuming all coreset sizes are of size \(s \le r\),  using a pigeonhole argument, we observe that there exist two sets $S, T \in \mathcal{F}$ that have the same coreset $Q$. This is because \(
|\mathcal F|=\exp(\Omega(r\log(n/r)))> \sum_{j\le s}\binom nj .
\)

However, no single center can be a \(\beta\)-approximation for both \(S\) and \(T\) when \(\beta<4\), provided \(\tau\) is sufficiently small. Indeed, \(c_S\) has cost at least \((4-3\tau)r\) on \(T\), \(c_T\) has cost at least \((4-3\tau)r\) on \(S\), and any other candidate center is bad for both by the same packing argument. Moreover, any data point \(p_i\in A\) has cost at least \(4(r-1)\) on either instance. Thus the common output fails to achieve factor \(\beta\) on at least one of $S$ and $T$ and we get that the coreset size of one of them must be at least $\Omega(r)$.

Finally,
\(
N=n+|\mathcal F|=\exp(\Theta(\sqrt n\log n)).
\)
Thus
\(
r=\sqrt n=\Theta\!\left(\frac{\log N}{\log\log N}\right),
\) which gives us the lower bound.
\end{proof}

\section{Experimental Evaluation}
\label{sec:experiments}

In this section, we give a detailed overview of our experimental results. Codebase and results can be found \href{https://anonymous.4open.science/r/icml2026-5272}{here}.
Google Colab CPU was used to conduct the experiments. The goal is to analyze the performance of some prominent approximation algorithms for the $k$-means problem and to which degree the approximation preserving guarantee holds for real world datasets.

\paragraph{Datasets.} We use three popular datasets - Adult, MNIST and CoverType for the experiments.

\paragraph{Algorithms.} We test the performance of APCs using three approximation algorithms $\textsc{K-means++}$, \textsc{LOCAL-Search} and \textsc{LOCAL-Search++}. \textsc{K-means++} is an initialization method that samples centers sequentially from a distribution, favoring points far from the current centers, after which Lloyd's iterations are run to improve the solution. \textsc{LOCAL-Search} starts from an initial solution and repeatedly attempts improving single-center swaps (replacing one center by a data point). \textsc{LOCAL-Search++} \cite{LattanziS19,choo2020k} follows the same swap-based improvement scheme, but proposes swap candidates via the same $D^2$ (cost-biased) sampling used in k-means++, focusing the search on points with high cost.

As a baseline, we execute the algorithms on the full dataset and obtain approximate solutions. To analyze the performance of APCs, we construct these coresets and for each dataset compare the approximate solutions obtained on the coresets with  the ones obtained earlier.
\paragraph{Setup.} Let $m$ denote the number of points in our coreset. 
We run the experiments with parameters ${k = \{2,5,10,50,100\}}$ and $m = \{2k,5k,10k,50k,100k\}$. For each choice of parameters, we repeat each experiment 10 times and compute the averages of the output parameters.
\paragraph{Results.}We plot the values of \emph{average cost ratio} in terms of $m/k$ (see Figure \ref{fig:adultplots}). As $k$ is a lower bound on the number of coreset points, we examine how the approximation factor improves after only adding $O(k)$ points to the coreset.

Across all datasets and algorithms, the cost--ratio curves as a function of $m/k$ are visually consistent across different choices of $k$. Therefore, for each value of $m/k$ we aggregate the results by averaging the ratio over all $k$ values and trial runs, and plot these averages. See Appendix \ref{sec:furtherexperiments} for further details.

As expected, as $m/k$ increases, the accuracy of the solution obtained on the coreset increases, with the ratio approaching $1$. Sampling $10k$ points already leads to only a $1.2$ factor loss in approximation as compared to the full dataset.

\begin{figure}[htbp]
  \centering

  \begin{subfigure}{\linewidth}
    \centering
    \includegraphics[scale=0.4]{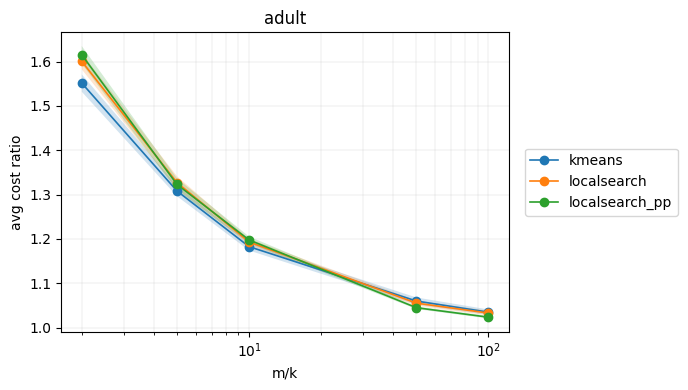}
    \caption{Adult dataset}
  \end{subfigure}

  \caption{Average cost ratio vs.\ $m/k$ (Adult dataset) for each algorithm: for each run we take the clustering cost on the full dataset using centers learned from the coreset, divide by the cost on the full dataset using centers learned from the full dataset, and then average this ratio over all $k$ and trials.}
  \label{fig:adultplots}
\end{figure}

In addition, we plot the \emph{ratio of the running times} for running the approximation algorithms on the coresets and the full dataset. 

Due to space constraints, we just give the plot for one of the datasets. We observe substantial speedups from using coresets, especially for \textsc{Local-Search} and \textsc{Local-Search++}: runtimes drop by roughly $20\times$--$80\times$. As expected, the speedup diminishes as the coreset size increases.

\begin{figure}[htbp]
\centering
\includegraphics[scale=0.4]{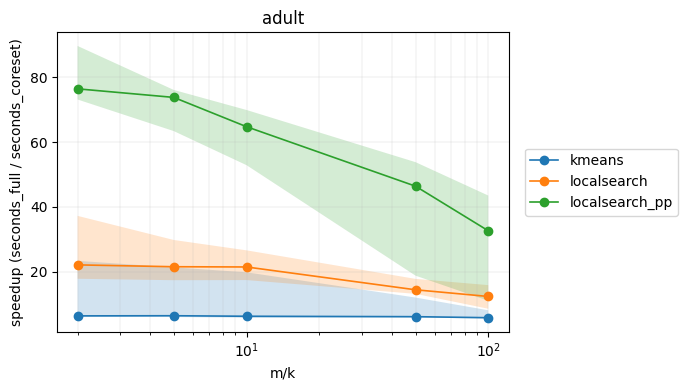}
\caption{Speedup from using coresets vs.\ $m/k$: for each dataset and algorithm, we report the ratio between the runtime of running the algorithm directly on the full dataset and the coreset}
\end{figure}

Finally, we report a table of costs for the three algorithms \textsc{k-means++}, \textsc{LocalSearch}, and \textsc{LocalSearch++} across all choices of $k$, $m$, and datasets. The point is to demonstrate that our APC guarantees are largely insensitive to the quality of the underlying approximation routine: even when an algorithm produces better or worse solutions on the full data, the corresponding solution obtained via the coreset remains comparably good, indicating that the APC is effectively \emph{universal} across algorithms. The details are relegated to the appendix.

\section*{Acknowledgements}
C.S. and S.S. are supported by a Google Research Award.

\section*{Impact Statement}

This paper presents work whose goal is to advance the field
of Machine Learning. There are many potential societal
consequences of our work, none which we feel must be
specifically highlighted here.

\bibliography{ref}
\bibliographystyle{icml2026}
\appendix
\onecolumn

\section{Sensitivity Sampling}\label{sec:sens-samp}

We briefly recall the notation used throughout this section.
Let $A=\{a_1,\ldots,a_k\}$ denote an $O(1)$-approximate $k$-means solution for $P$,
and let $C_1,\ldots,C_k$ be the induced clustering, where
$C_j=\{p\in P : a_j \text{ is the closest center to } p\}$.
For each cluster $C_j$, we write
$\cost(C_j,A)=\sum_{p\in C_j}\cost(p,A)$ and
$\Delta_j:=\cost(C_j,A)/|C_j|$ for its average cost.

The coreset $\Omega = \{q_1, \ldots, q_m\}$ with weights $\{w(q_1), \ldots, w(q_m)\}$ is obtained by the sensitivity sampling algorithm defined
previously (see \Cref{alg: sensitivity-sampling}). For any subset $T \subseteq P$ and any set of centers $S$, we define
\[
\costom(T,S) := \sum_{q \in \Omega \cap T} w(q)\,\cost(q,S).
\]

Our main goal is to prove the following theorem.  

\UniformSensitivity*

\paragraph{Setup and parameter choice.}
For the remainder of this section, we fix the coreset size to be
\[
m = C \varepsilon^{-2}\cdot \bigl(k\log (k/\delta ) + \log N \bigr),
\]
for a sufficiently large absolute constant $C$, and analyze the guarantees of the
resulting coreset.

\paragraph{Proof roadmap.}
We prove Theorem~\ref{thm:sensitivity} in two steps. First, we establish a key
structural property of the sampled coreset—valid for general metric spaces:
with high probability it preserves, for every approximate-$k$-means cluster $C_j$,
both (i) the total coreset weight assigned to $C_j$ and (ii) the weighted cost of
points in $C_j$ to the approximate solution $A$ (Event~$E$).
This per-cluster preservation will be used as a black box throughout the proof.
Second, fixing any candidate solution $S\in\mathcal S$, we partition clusters into
\emph{close} and \emph{far} depending on the distance from their representative
center $a_j$ to $S$. For close clusters we bound the approximation error by
concentration (Bernstein) and take a union bound over $S\in\mathcal S$, while far
clusters are controlled deterministically using Event~$E$.
Combining the two bounds and union bounding completes the proof.

\paragraph{Structural properties of the coreset.}
We first record a cluster-level event satisfied by sensitivity sampling.

\begin{definition}[Event $E$]\label{def:event-e}
The coreset $\Omega$ satisfies event $E$ if for every cluster $C_j$,
\begin{align*}
\sum_{q \in C_j \cap \Omega} w(q)
= (1\pm \eps)\,|C_j|, \qquad
\sum_{q\in C_j\cap \Omega} w(q)\cost(q,A)
= (1\pm \eps)\,\cost(C_j,A).
\end{align*}
\end{definition}

\begin{claim}\label{clm:evente}
The coreset $\Omega$ satisfies event~$E$ with probability at least $1-\delta/4$. 
\end{claim}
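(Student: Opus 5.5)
Fix a cluster $C_j$ and handle the two quantities in Event~$E$ separately. Both are sums of $m$ i.i.d.\ terms, so the plan is Bernstein's inequality for each, followed by a union bound over the $2k$ resulting events.

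For the weight statement, let $X_i := \mathbf{1}[q_i\in C_j]/(m\mu(q_i))$. Then $\sum_{q\in C_j\cap\Omega} w(q)=\sum_{i=1}^m X_i$, and
\[
\E\Bigl[\sum_i X_i\Bigr]=\sum_{p\in C_j}\mu(p)\cdot\frac{1}{\mu(p)}=|C_j|.
\]
The first term in the definition of $\mu$ gives $\mu(p)\ge 1/(4k|C_j|)$ for every $p\in C_j$. Hence each summand satisfies $0\le X_i\le 4k|C_j|/m$, and
\[
\Var\Bigl(\sum_i X_i\Bigr)\le m\,\E[X_i^2]\le \frac{1}{m}\sum_{p\in C_j}\frac{1}{\mu(p)}\le \frac{4k|C_j|^2}{m}.
\]
Bernstein's inequality with deviation $\eps|C_j|$ then gives failure probability at most $2\exp\bigl(-c\,\eps^2 m/k\bigr)$.

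For the cost statement, set $Y_i := \mathbf{1}[q_i\in C_j]\cost(q_i,A)/(m\mu(q_i))$. Its expectation is $\cost(C_j,A)$. The second term of $\mu$ gives $\mu(p)\ge \cost(p,A)/(4k\cost(C_j,A))$, so $Y_i\le 4k\cost(C_j,A)/m$. The same variance computation yields $\Var\le 4k\cost(C_j,A)^2/m$, and Bernstein again gives failure probability $2\exp(-c\eps^2 m/k)$. If some point has $\cost(p,A)=0$, it contributes nothing to either side and can be ignored.

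A union bound over the $k$ clusters and both statements gives total failure probability at most $4k\exp(-c\eps^2 m/k)$. Since $m\ge C\eps^{-2}k\log(k/\delta)$, the exponent is at least $cC\log(k/\delta)$, so this is at most $\delta/4$ once $C$ is large enough.

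The only point needing care is that each cluster's lower bound on $\mu$ is relative to that cluster's own scale, $|C_j|$ or $\cost(C_j,A)$, rather than to global quantities. This is exactly what the $1/k$-normalized first two terms of $\mu$ provide, and it is why the sample size needs only a $k\log(k/\delta)$ dependence rather than any dependence on $n$. Otherwise the argument is routine.
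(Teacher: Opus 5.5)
Your proposal is correct and follows the paper's proof essentially step for step. For each cluster you apply Bernstein's inequality separately to the sampled weight and to the sampled $A$-cost. The first two terms of $\mu$ give range bounds $O(k|C_j|/m)$ and $O(k\cost(C_j,A)/m)$ and variance bounds $O(k|C_j|^2/m)$ and $O(k\cost(C_j,A)^2/m)$, and a union bound over the $2k$ events finishes; your remark that zero-cost points contribute nothing to either side also covers the degenerate case $\cost(C_j,A)=0$, which the paper treats separately.
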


This property follows from the design of the sensitivity sampling distribution,
which guarantees sufficient sampling from every cluster while controlling the
variance contributed by points of different costs. The formal proof is a
standard application of Bernstein's inequality and presented later in \Cref{subsec:event-e-whp}.

In addition to the coreset properties above, we will repeatedly use the
following simple geometric bound. It relates the cost of an arbitrary solution
$S$ evaluated at cluster representatives of the approximate solution $A$ to the
true cost $\cost(P,S)$.

\begin{claim}\label{clm:useful-lb}
    For any set of centers $S$,
    \begin{align*}
        \sum_{j \in [k]} |C_j| \cost(a_j, S) \lesssim \cost(P, S). 
    \end{align*}
\end{claim}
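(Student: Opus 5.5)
The approach is a pointwise triangle inequality followed by summation. For a point $p\in C_j$, the relaxed triangle inequality for squared distances gives $\cost(a_j,S)\le 2\cost(p,a_j)+2\cost(p,S)$. This holds because $d(a_j,S)\le d(a_j,p)+d(p,S)$ and $(x+y)^2\le 2x^2+2y^2$. We use that $p\in C_j$, so $\cost(p,A)=\cost(p,a_j)$.

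Summing over all $p\in C_j$ gives
\[
|C_j|\cost(a_j,S)\le 2\cost(C_j,A)+2\cost(C_j,S),
\]
where $\cost(C_j,S)=\sum_{p\in C_j}\cost(p,S)$. Summing over $j\in[k]$ yields
\[
\sum_j |C_j|\cost(a_j,S)\le 2\cost(P,A)+2\cost(P,S).
\]

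The only remaining step is to bound $\cost(P,A)$ by $\cost(P,S)$. Since $A$ is an $O(1)$-approximate $k$-means solution, $\cost(P,A)\le O(1)\cdot \OPT(P)$.

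Here one must check that the optimum is over the same center space as $S$. $A$ is computed with centers in $X$, and $S$ is any $k$-center set in $X$, so $\OPT(P)\le \cost(P,S)$. This is the case whenever $|S|\le k$, which holds for all solutions in the family $\mathcal S$. If $S$ may have more than $k$ centers the claim can fail, so I restrict to $k$-center sets, as in Theorem~\ref{thm:sensitivity}.

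Combining the two bounds gives $\sum_j|C_j|\cost(a_j,S)\le (2+O(1))\cost(P,S)$, which is the claim with the constant hidden in $\lesssim$. The argument is elementary and has no real obstacle; the only care needed is the comparison of $A$ to $S$ via optimality.
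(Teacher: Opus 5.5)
Your proof is correct and follows the same route as the paper's: the relaxed triangle inequality $\cost(a_j,S)\le 2\cost(p,a_j)+2\cost(p,S)$ summed over all points, followed by $\cost(P,A)\lesssim\cost(P,S)$ because $A$ is an $O(1)$-approximation. Your remark that this last step needs $S$ to be a set of at most $k$ centers in the same center space is a caveat the paper leaves implicit, and it holds for every solution in $\mathcal S$.
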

\begin{proof}
For any point $p\in C_j$, the approximate triangle inequality gives
\[
\cost(a_j,S)\le 2(\cost(a_j,p)+\cost(p,S)).
\]
Summing this inequality over all $p\in C_j$ and then over all clusters gives
\[
\sum_{j=1}^k |C_j|\cost(a_j,S)
\lesssim \cost(P,A)+\cost(P,S).
\]
Since $A$ is an $O(1)$-approximate solution, $\cost(P,A)\lesssim \cost(P,S)$, completing the proof.
\end{proof}

\subsection{Proof of Theorem~\ref{thm:sensitivity}}
Proving Theorem~\ref{thm:sensitivity} reduces to showing that
\begin{equation}\label{eq:coreset}
\Pr_{\Omega}\!\left[
\sup_{S \in \mathcal{S}}
\left|
\frac{\cost(P,S)-\costom(P,S)}{\cost(P,S)}
\right|
\ge \eps
\right]
\le \delta .
\end{equation}

\paragraph{Far and close clusters.}

Fix $S\in\mathcal S$. We partition clusters based on 
the distance of cluster representatives from $S$.

\begin{definition}[Far and close clusters]
A cluster $C_j$ is \emph{far} from $S$ if $\cost(a_j,S) > \Delta_j\,\eps^{-2}$, and
\emph{close} otherwise.
\end{definition}

Accordingly, we write $P_F(S)$ for the points belonging to clusters that are far
from $S$, and $P_C(S)$ for the points belonging to close clusters.

Far clusters are easy to handle using the cluster-level structural guarantees
provided by sensitivity sampling; their contribution to the total cost can be
controlled directly. The proof of the following lemma is presented later.

\begin{lemma}[Far clusters]\label{lem:goal-far}
With probability at least $1-\delta/2$ over the randomness of the coreset,
\begin{align}\label{eq:far}
\sup_{S\in\mathcal S}
\left|
\frac{\cost(P_F(S),S)-\costom(P_F(S),S)}{\cost(P,S)}
\right|
\le \eps/2 .
\end{align}
\end{lemma}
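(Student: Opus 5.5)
The plan is to argue deterministically on Event~$E$ (Claim~\ref{clm:evente}, which holds with probability $1-\delta/4\ge 1-\delta/2$). On this event, \eqref{eq:far} holds simultaneously for every $S$, so no union bound over $\mathcal S$ is needed.

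Fix $S$ and a far cluster $C_j$, so $\cost(a_j,S)>\eps^{-2}\Delta_j$. For $p\in C_j$, the idea is that $\cost(p,S)$ is essentially the constant $\cost(a_j,S)$, up to an additive term controlled by $\cost(p,A)$. Concretely, from $d(p,S)\in d(a_j,S)\pm d(p,a_j)$ we get, for any $\eta\in(0,1)$,
\[
|\cost(p,S)-\cost(a_j,S)| \le \eta\,\cost(a_j,S)+(1+\eta^{-1})\cost(p,A),
\]
using $2xy\le \eta x^2+\eta^{-1}y^2$. Summing this with weight $1$ over $P$ and with weights $w(q)$ over $\Omega\cap C_j$ gives
\begin{align*}
\bigl|\cost(C_j,S)-\costom(C_j,S)\bigr| \le{}& \cost(a_j,S)\Bigl|\,|C_j|-\textstyle\sum_{q\in C_j\cap\Omega}w(q)\Bigr| \\
&+\eta\,\cost(a_j,S)\Bigl(|C_j|+\textstyle\sum_{q} w(q)\Bigr) \\
&+(1+\eta^{-1})\Bigl(\cost(C_j,A)+\textstyle\sum_{q}w(q)\cost(q,A)\Bigr).
\end{align*}
Event~$E$ bounds the first term by $\eps|C_j|\cost(a_j,S)$. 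It bounds the second by $3\eta|C_j|\cost(a_j,S)$ and the third by $3(1+\eta^{-1})\cost(C_j,A)=3(1+\eta^{-1})|C_j|\Delta_j$.

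Next I use the far condition, $|C_j|\Delta_j<\eps^2|C_j|\cost(a_j,S)$. Choosing $\eta=\eps$ makes the third term at most $O(\eps)\,|C_j|\cost(a_j,S)$, so the total error on $C_j$ is $O(\eps)\,|C_j|\cost(a_j,S)$. Summing over the far clusters and applying Claim~\ref{clm:useful-lb} bounds the total error by $O(\eps)\cost(P,S)$. Running the whole argument with $\eps$ replaced by a small constant multiple (which only changes the constant $C$ in $m$) gives the stated $\eps/2$.

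The only delicate point is the additive error from $\cost(p,A)$. This is exactly where the threshold $\eps^{-2}\Delta_j$ matters: the factor $\eta^{-1}=\eps^{-1}$ is absorbed by the $\eps^2$ gap between $\Delta_j$ and $\cost(a_j,S)$. The rest is bookkeeping, since Event~$E$ only requires per-cluster accuracy $\eps$ (rescaled).
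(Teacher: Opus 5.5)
Your proposal is correct and follows essentially the paper's argument. Both condition on Event~$E$, compare each far cluster's true and sampled cost to the proxy $\cost(a_j,S)$ times the true or sampled cluster weight, absorb the additive $\cost(p,A)$ terms using $\Delta_j<\eps^2\cost(a_j,S)$, and sum via Claim~\ref{clm:useful-lb}; the only difference is cosmetic, since you split $2xy\le\eta x^2+\eta^{-1}y^2$ with $\eta=\eps$ where the paper bounds by $\sqrt{\cost(a_j,S)\cost(p,A)}$ and applies Cauchy--Schwarz, which spares you the normalized quantity $\Phi_j$.
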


The remaining work is therefore to bound the error contributed by close clusters.
We do this next.

\begin{lemma}[Close clusters]\label{lem:goal-close}
With probability at least ${1-\delta/2}$ over the randomness of the coreset,
\[
\sup_{S\in\mathcal S}
\left|
\frac{\cost(P_C(S),S)-\costom(P_C(S),S)}{\cost(P,S)}
\right|
\le \eps/2 .
\]
\end{lemma}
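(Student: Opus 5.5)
Fix $S\in\mathcal S$. Write the close-cluster error as a sum of i.i.d. terms,
\[
\costom(P_C(S),S)-\cost(P_C(S),S)=\sum_{i=1}^m \bigl(X_i-\E X_i\bigr),\qquad X_i=\frac{\cost(q_i,S)\mathbf 1[q_i\in P_C(S)]}{m\mu(q_i)}.
\]
Each $X_i$ has mean $\cost(P_C(S),S)/m$. The plan is to bound $\max_i X_i$ and $\sum_i \Var X_i$ in terms of $\cost(P,S)/m$. Bernstein's inequality then gives
\[
\Pr\bigl[|\text{error}|>\tfrac{\eps}{2}\cost(P,S)\bigr]\le 2\exp\bigl(-c\,\eps^2 m\bigr).
\]
With $m=C\eps^{-2}(k\log(k/\delta)+\log N)$, this is at most $\delta/(2N)$. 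A union bound over the $N$ solutions of $\mathcal S$ finishes the proof. Note that $P_C(S)$ depends only on $S$ and $A$, not on $\Omega$, so the summands are legitimately i.i.d. for fixed $S$.

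The key step is a pointwise sensitivity bound: for $p\in C_j$ with $C_j$ close,
\[
\frac{\cost(p,S)}{\mu(p)}\lesssim \eps^{-2}\cost(P,S).
\]
By the approximate triangle inequality,
\[
\cost(p,S)\le 2\cost(p,A)+2\cost(a_j,S)\le 2\cost(p,A)+2\eps^{-2}\Delta_j,
\]
using closeness. We treat the two terms separately.

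The term $\cost(p,A)$ is charged to $\mu(p)\ge \cost(p,A)/(4\cost(P,A))$, giving ratio $O(\cost(P,A))=O(\cost(P,S))$ since $A$ is an $O(1)$-approximation.

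The term $\eps^{-2}\Delta_j$ is charged to $\mu(p)\ge \Delta_j/(4\cost(P,A))$, giving ratio $O(\eps^{-2}\cost(P,S))$. This is where the $\Delta_p$ term in $\mu$ is used.

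Hence $X_i\le M:=O(\eps^{-2}\cost(P,S)/m)$.

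For the variance,
\[
\sum_i\Var X_i\le m\,\E X_i^2\le m\cdot M\cdot\E X_i\le M\cdot \cost(P,S),
\]
so $\sum_i\Var X_i=O(\eps^{-2}\cost(P,S)^2/m)$. Bernstein's inequality with deviation $t=\eps\cost(P,S)/2$ then yields an exponent of order
\[
\frac{t^2}{\sum_i\Var X_i+Mt}\gtrsim \frac{\eps^2\cost(P,S)^2}{\eps^{-2}\cost(P,S)^2/m}=\eps^4 m.
\]

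This is the main obstacle: the crude bound gives $\eps^4 m$, not the required $\eps^2 m$, so we lose a factor $\eps^{-2}$. To recover it I would refine the variance calculation instead of using $\E X^2\le M\,\E X$. The refined computation is
\[
m\,\E X_i^2=\frac1m\sum_{p\in P_C}\frac{\cost(p,S)^2}{\mu(p)},
\]
and I split $\cost(p,S)^2\lesssim \cost(p,A)^2+\cost(a_j,S)^2$.

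\emph{First part.} Since $\mu(p)\ge\cost(p,A)/(4\cost(P,A))$,
\[
\sum_p\frac{\cost(p,A)^2}{\mu(p)}\lesssim \cost(P,A)\sum_p\cost(p,A)=\cost(P,A)^2,
\]
which is fine.

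\emph{Second part.} Within each close cluster, use both $\mu(p)\ge 1/(4k|C_j|)$ and $\mu(p)\ge\Delta_j/(4\cost(P,A))$. From the second lower bound,
\[
\sum_{p\in C_j}\frac{\cost(a_j,S)^2}{\mu(p)}\lesssim |C_j|\cost(a_j,S)^2\frac{\cost(P,A)}{\Delta_j}.
\]
Applying closeness once, $\cost(a_j,S)/\Delta_j\le\eps^{-2}$, this is at most
\[
\eps^{-2}\,|C_j|\cost(a_j,S)\,\cost(P,A).
\]
Summing over close clusters and using Claim~\ref{clm:useful-lb} bounds the total by $\eps^{-2}\cost(P,S)^2$. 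This is still lossy.

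To avoid the loss, I would instead center each close cluster's contribution around $\cost(a_j,S)$. Writing
\[
\cost(p,S)=\cost(a_j,S)+\bigl(\cost(p,S)-\cost(a_j,S)\bigr),
\]
the constant part $\cost(a_j,S)\sum_{q\in C_j\cap\Omega}w(q)$ is controlled deterministically by the weight guarantee of Event~$E$ (Claim~\ref{clm:evente}). Its error is $\eps\sum_j|C_j|\cost(a_j,S)\lesssim\eps\cost(P,S)$ by Claim~\ref{clm:useful-lb}, and this holds for all $S$ simultaneously with no union bound.

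The fluctuating remainder satisfies
\[
|\cost(p,S)-\cost(a_j,S)|\lesssim \sqrt{\cost(p,A)\cost(a_j,S)}+\cost(p,A)\le \eps\cost(a_j,S)+\eps^{-1}\cost(p,A).
\]
This suggests pushing through Bernstein with terms bounded by $\sqrt{\cost(p,A)\cost(a_j,S)}$, whose squared sum relative to $\mu$ is $\lesssim\cost(P,A)\cdot\sum_j|C_j|\cost(a_j,S)\lesssim\cost(P,S)^2$, with no $\eps$ loss in the variance. The max term is only $\eps^{-1}$-lossy, which Bernstein tolerates since $Mt$ then contributes $\eps^{-1}\cdot\eps=O(1)$. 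This gives exponent $\eps^2 m$, as needed.

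The $k\log(k/\delta)$ term in $m$ is spent on Event~$E$ and intersects with the $\delta/2$ budget. Verifying the square-root remainder bound cleanly in general metrics for squared distances is the step I expect to need the most care.
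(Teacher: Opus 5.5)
Your final argument (centering each close cluster at $\cost(a_j,S)$, controlling the constant part for all $S$ at once via the weight guarantee of Event~$E$ and Claim~\ref{clm:useful-lb}, then applying Bernstein to the remainder with variance $O(\cost(P,S)^2/m)$ and range $O(\eps^{-1}\cost(P,S)/m)$ plus a union bound over $\mathcal S$) is correct and is exactly the paper's $A(S)+B(S)$ decomposition via Claims~\ref{eq:center} and~\ref{eq:remainder}. The differences are cosmetic: you bound the remainder by $\sqrt{\cost(p,A)\cost(a_j,S)}+\cost(p,A)$ rather than $\sqrt{\cost(p,A)\cost(p,S)}+\cost(p,A)$ (both follow directly from the triangle inequality, so the step you flagged needs no special care), and your opening direct-Bernstein attempt, which you correctly diagnose as losing a factor $\eps^{-2}$, should simply be dropped.
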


By a union bound, Lemmas~\ref{lem:goal-far} and~\ref{lem:goal-close} hold
simultaneously with probability at least $1-\delta$, and together establish
Theorem~\ref{thm:sensitivity}.

\subsection{Handling Close Clusters}

We now prove Lemma~\ref{lem:goal-close}. Fix a solution $S\in\mathcal S$.
Our goal is to control the contribution of points belonging to clusters that are
close to $S$.

Recall that $P_C(S)$ denotes the set of points in clusters that are close to $S$.
We decompose their total cost to $S$ into two components: a term that depends only
on the distances between cluster representatives and $S$, and a residual term
that captures the variation of individual points within each cluster. Formally,
we decompose
\[
\cost(P_C(S),S) \;=\; A(S) + B(S),
\]
where
\begin{align*}
A(S) &:= \sum_{p \in P_C(S)} \bigl(\cost(p,S)-\cost(a(p),S)\bigr), \\
B(S) &:= \sum_{p \in P_C(S)} \cost(a(p),S).
\end{align*}

We analyze these two terms separately. Let $\Omega$ denote the random coreset
produced by sensitivity sampling. The contribution of close-cluster points in the
coreset admits an analogous decomposition:
\[
\costom(P_C(S),S) \;=\; C(S,\Omega)+D(S,\Omega),
\]
where
\begin{align*}
C(S,\Omega) &:= \sum_{q_i \in \Omega \cap P_C(S)}
w(q_i) \cdot \bigl(\cost(q_i,S)-\cost(a(q_i),S)\bigr)
 \\
D(S,\Omega) &:= \sum_{q_i \in \Omega \cap P_C(S)}
w(q_i)\cdot \cost(a(q_i),S).
\end{align*}

By construction of the coreset weights, $C(S,\Omega)$ and $D(S,\Omega)$ are
unbiased estimators of $A(S)$ and $B(S)$, respectively. We show that, for the
choice of $m$ specified in Lemma~\ref{lem:goal-close}, both estimators concentrate
sufficiently well.

\begin{restatable}{claim}{remaindercost}\label{eq:remainder}
With probability at least $1-\delta/4$,
\[
\sup_{S \in \mathcal{S}}
\left| \frac{A(S) - C(S,\Omega)}{\cost(P,S)} \right|
\le \eps/4 .
\]
\end{restatable}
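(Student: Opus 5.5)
The plan is to prove the bound for a single fixed $S\in\mathcal S$ via Bernstein's inequality with failure probability $\delta/(4N)$, and then take a union bound over the $N$ solutions in $\mathcal S$. Fix $S$. The set of close clusters, and hence $P_C(S)$, is then deterministic. Write $C(S,\Omega)=\sum_{i=1}^m X_i$ with
\[
X_i := \frac{\mathbf 1[q_i\in P_C(S)]}{m\,\mu(q_i)}\bigl(\cost(q_i,S)-\cost(a(q_i),S)\bigr).
\]
The $X_i$ are i.i.d. with $\E[X_i]=A(S)/m$. Everything then reduces to a uniform bound $M$ on $|X_i|$ and a bound on $\sum_i\Var(X_i)$, both expressed in terms of $\cost(P,S)$.

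\emph{Pointwise bound.} For $p\in C_j$ with $a=a_j$, I will use $|d(p,S)^2-d(a,S)^2|\le d(p,a)\bigl(d(p,a)+2d(a,S)\bigr)$, which follows from $|d(p,S)-d(a,S)|\le d(p,a)$. This gives
\[
|\cost(p,S)-\cost(a,S)|\lesssim \cost(p,A)+\sqrt{\cost(p,A)\cost(a_j,S)}.
\]
For a close cluster we have $\cost(a_j,S)\le \eps^{-2}\Delta_j$, so by AM-GM the second term is at most $\eps^{-1}\bigl(\cost(p,A)+\Delta_j\bigr)/2$. The sampling distribution satisfies $\mu(p)\ge (\cost(p,A)+\Delta_p)/(8\cost(P,A))$, using its third and fourth terms. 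Together with $\cost(P,A)\lesssim \cost(P,S)$, this yields
\[
|X_i|\lesssim \frac{\eps^{-1}\cost(P,S)}{m}=:M.
\]

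\emph{Variance bound.} The squared summand is bounded by $\cost(p,A)^2+\cost(p,A)\cost(a_j,S)$, up to constants. Using only $\mu(p)\gtrsim \cost(p,A)/\cost(P,A)$, we get
\[
m^2\sum_i\Var(X_i)\le m\sum_{p}\frac{(\cdot)^2}{\mu(p)}\lesssim m\,\cost(P,A)\Bigl(\cost(P,A)+\sum_j |C_j|\cost(a_j,S)\Bigr).
\]
Here $\sum_{p\in C_j}\cost(p,A)\cost(a_j,S)$ is bounded by $|C_j|\Delta_j\cost(a_j,S)$; more crudely, the sum over $p$ of $\cost(p,A)$ inside cluster $j$ is $\cost(C_j,A)$. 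It is cleaner to apply Cauchy–Schwarz, or simply to bound $\cost(C_j,A)\cost(a_j,S)\le \cost(P,A)\cdot|C_j|\cost(a_j,S)/|C_j|\cdot|C_j|$. In either form, Claim~\ref{clm:useful-lb} gives $\sum_j|C_j|\cost(a_j,S)\lesssim\cost(P,S)$. Hence $\sum_i\Var(X_i)\lesssim \cost(P,S)^2/m$.

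This variance step is where I expect the real care to be needed. The cross term must be charged using only the $\cost(p,A)/\cost(P,A)$ component of $\mu$, not the $1/k$ components, and the proof must avoid ever picking up a factor $\eps^{-2}$ from the closeness threshold. That threshold is used only in the sup bound, where it costs just $\eps^{-1}$.

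\emph{Concentration and union bound.} Bernstein's inequality with deviation $t=\eps\cost(P,S)/4$ gives
\[
\Pr\bigl[|C(S,\Omega)-A(S)|>t\bigr]\le 2\exp\!\Bigl(-\frac{c\,t^2}{\cost(P,S)^2/m+Mt}\Bigr)=2\exp(-c'\eps^2 m).
\]
This works because $Mt\approx \cost(P,S)^2/m$, so the $\eps^{-1}$ in $M$ is cancelled by the $\eps$ in $t$. With $m=C\eps^{-2}(k\log(k/\delta)+\log N)\ge C\eps^{-2}\log(8N/\delta)$, the right-hand side is at most $\delta/(4N)$. A union bound over the $N$ solutions in $\mathcal S$ then gives the claim with probability $1-\delta/4$.
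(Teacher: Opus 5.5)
Your proof is correct and follows the paper's argument. For a fixed $S$ you apply Bernstein, using the $\cost(p,A)/\cost(P,A)$ part of $\mu$ to bound the variance, and the closeness threshold (which costs only $\eps^{-1}$) together with the $\Delta_p$ and $\cost(p,A)$ parts of $\mu$ to bound the range, then union bound over $\mathcal S$. The only difference is cosmetic: you write the cross term as $\sqrt{\cost(p,A)\cost(a_j,S)}$ and therefore need Claim~\ref{clm:useful-lb} to control $\sum_j |C_j|\cost(a_j,S)$, whereas the paper writes it as $\sqrt{\cost(p,A)\cost(p,S)}$ and sums directly to $\cost(P,S)$; your aside about $|C_j|\Delta_j\cost(a_j,S)$ and Cauchy--Schwarz is unnecessary, since your displayed bound already follows from dividing by $\mu(p)$ termwise.
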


\begin{restatable}{claim}{anchorcost}\label{eq:center}
With probability at least $1-\delta/4$,
\[
\sup_{S \in \mathcal{S}}
\left| \frac{B(S) - D(S,\Omega)}{\cost(P,S)} \right|
\le \eps/4 .
\]
\end{restatable}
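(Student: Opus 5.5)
The statement immediately above is Claim~\ref{eq:center}, the concentration of $D(S,\Omega)$ around $B(S)$ uniformly over $S\in\mathcal S$. The plan is to first reduce $D(S,\Omega)$ to a deterministic function of the per-cluster weight sums controlled by Event~$E$, rather than applying Bernstein directly.

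Since $\cost(a(q),S)$ is constant on each cluster $C_j$, we have
\[
D(S,\Omega)=\sum_{j\,:\,C_j \text{ close}} \cost(a_j,S)\sum_{q\in C_j\cap\Omega} w(q),
\qquad
B(S)=\sum_{j\,:\,C_j\text{ close}} |C_j|\cost(a_j,S).
\]
Condition on Event~$E$ (Definition~\ref{def:event-e}), which holds with probability at least $1-\delta/4$ by Claim~\ref{clm:evente}. Note that $E$ does not depend on $S$. On $E$, each inner weight sum equals $(1\pm\eps)|C_j|$. Because every term $\cost(a_j,S)\ge 0$, this gives
\[
|B(S)-D(S,\Omega)|\le \eps\sum_j |C_j|\cost(a_j,S)\lesssim \eps\,\cost(P,S),
\]
where the last step is Claim~\ref{clm:useful-lb}. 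This bound holds simultaneously for every $S$, so no union bound over $\mathcal S$ is needed. The close/far distinction is irrelevant here; restricting the sum to close clusters only helps.

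The one issue is the constant: Claim~\ref{clm:useful-lb} carries an $O(1)$ factor, and we need the error to be at most $\eps/4$. To absorb it, Event~$E$ must hold at accuracy $\eps/c$ for a suitable constant $c$, rather than at accuracy $\eps$. This is free. Event~$E$ involves only $2k$ fixed quantities, and by the Bernstein argument behind Claim~\ref{clm:evente}, $O(\eps^{-2}k\log(k/\delta))$ samples achieve any constant-factor sharper accuracy once the absolute constant $C$ in $m$ is enlarged. So I would invoke Claim~\ref{clm:evente} with $\eps/c$ and finish. There is no real obstacle; the only care required is tracking this constant.
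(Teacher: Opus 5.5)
Your proposal is correct and follows the paper's proof: condition on Event~$E$, use that $\cost(a(q),S)$ is constant on each cluster so the per-cluster weight guarantee gives $|B(S)-D(S,\Omega)|\le \eps\sum_j |C_j|\cost(a_j,S)$, and finish with Claim~\ref{clm:useful-lb}, absorbing the constant by running Event~$E$ at accuracy $\eps/c$, which is what the paper's ``rescaling constants'' amounts to. You also correctly note that $E$ does not depend on $S$, which is why neither argument needs a union bound over $\mathcal S$.
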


The benefit of this decomposition is twofold: the center term $D(S,\Omega)$ is
controlled directly using the cluster-level structural guarantees of sensitivity
sampling, while the remainder term $C(S,\Omega)$ admits strong variance bounds,
allowing us to apply concentration inequalities. We begin by proving
Claim~\ref{eq:center}.
\begin{proof}[Proof of Claim~\ref{eq:center}]
Let $\mathcal J_C(S)$ be the set of clusters that are close to $S$. Then
\[
B(S)=\sum_{j\in\mathcal J_C(S)} |C_j|\,\cost(a_j,S).
\]
Condition on Event~$E$. For every cluster $C_j$, the total weight of coreset
points in $C_j$ is $(1\pm\eps)|C_j|$. Hence
\[
D(S,\Omega)
=
\sum_{j\in\mathcal J_C(S)}
(1\pm\eps)|C_j|\,\cost(a_j,S).
\]
It follows that
\[
|B(S)-D(S,\Omega)|
\le
\eps\sum_{j\in\mathcal J_C(S)} |C_j|\,\cost(a_j,S)
\le
\eps\sum_{j=1}^k |C_j|\,\cost(a_j,S)
\lesssim
\eps\,\cost(P,S),
\]
where the last inequality follows from Claim~\ref{clm:useful-lb}. Since Event~$E$
holds with probability at least $1-\delta/4$, the claim follows after rescaling
constants.
\end{proof}

\subsection{Proof of \Cref{eq:remainder}}

We prove Claim~\ref{eq:remainder}. Fix $S\in\mathcal S$. The goal is to control the deviation of the coreset estimator $C(S,\Omega)$ from its expectation $A(S)$. We do this by writing the normalized error as a sum of independent random variables and applying Bernstein's inequality.

For each sampled point $q_i$, define
\[
X_i(S,\Omega)
:=
\frac{
w(q_i)\bigl(\cost(q_i,S)-\cost(a(q_i),S)\bigr)\mathbf{1}[q_i\in P_C(S)]
}{
\cost(P,S)
}.
\]
Then
\[
\sum_{i=1}^m X_i(S,\Omega)
=
\frac{C(S,\Omega)}{\cost(P,S)},
\qquad
\E\left[\sum_{i=1}^m X_i(S,\Omega)\right]
=
\frac{A(S)}{\cost(P,S)}.
\]
Thus, proving Claim~\ref{eq:remainder} amounts to bounding the deviation of $\sum_i X_i(S,\Omega)$ from its expectation, uniformly over $S\in\mathcal S$.

\paragraph{Bernstein's inequality.}
 If $X_1,\ldots,X_m$ are independent random variables with $|X_i-\E X_i|\le M$, then for every $t\ge 0$,
\[
\Pr\left[
\left|\sum_{i=1}^m(X_i-\E X_i)\right|>t
\right]
\le
2\exp\left(
-\frac{t^2}{2\sum_{i=1}^m\Var(X_i)+\frac{2}{3}Mt}
\right).
\]
Thus, it remains to bound the total variance and the maximum deviation of one summand. We will use the following geometric inequalities.  
\begin{claim}
\label{clm:triangleineqfest}
For any point $p\in P$, the following hold.
\begin{enumerate}
    \item
    
    $\bigl|\cost(p,S)-\cost(a(p),S)\bigr|
    \lesssim
    \sqrt{\cost(p,S)\cost(p,A)}+\cost(p,A).
    $
    \item If $p$ lies in a cluster that is close to $S$, then
    $\bigl|\cost(p,S)-\cost(a(p),S)\bigr|
    \lesssim
    \eps^{-1}\bigl(\Delta_p+\cost(p,A)\bigr).$
\end{enumerate}
\end{claim}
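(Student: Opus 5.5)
Both parts reduce to the triangle inequality in the underlying metric, applied to the \emph{unsquared} distances. Write $x := \dist(p,S)$, $y := \dist(a(p),S)$ and $r := \dist(p,a(p)) = \sqrt{\cost(p,A)}$; this uses that $a(p)$ is the center of $A$ closest to $p$. Since $\dist(\cdot,S)$ is $1$-Lipschitz, $|x-y|\le r$. Then
\[
\bigl|\cost(p,S)-\cost(a(p),S)\bigr| = |x-y|\,(x+y) \le r\,(x+y).
\]
For part 1, bound $y\le x+r$, which gives $r(x+y)\le 2rx + r^2 = 2\sqrt{\cost(p,A)\cost(p,S)}+\cost(p,A)$. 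That is exactly the first claim, with constant $2$.

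For part 2 we start from the same identity, but now bound $x\le y+r$. This gives
\[
\bigl|\cost(p,S)-\cost(a(p),S)\bigr| \le 2ry + r^2 = 2\sqrt{\cost(p,A)\,\cost(a(p),S)}+\cost(p,A).
\]
Since $p$ lies in a close cluster $C_j$, the definition of close clusters gives $\cost(a(p),S)=\cost(a_j,S)\le \eps^{-2}\Delta_p$. Hence the first term is at most $2\eps^{-1}\sqrt{\cost(p,A)\,\Delta_p}$. By AM--GM, $\sqrt{\cost(p,A)\Delta_p}\le \tfrac12(\cost(p,A)+\Delta_p)$, so the first term is at most $\eps^{-1}(\Delta_p+\cost(p,A))$. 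The remaining $\cost(p,A)$ is at most $\eps^{-1}\cost(p,A)$ because $\eps<1$. Together these give part 2.

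There is no serious obstacle here. The one point to watch is choosing the right side of the triangle inequality in each part: in part 1 we eliminate $y$ in favor of $x$, and in part 2 we eliminate $x$ in favor of $y$, so that the close-cluster bound on $\cost(a_j,S)$ can be used. We also rely on the metric being a genuine metric, so that the distance to a set is $1$-Lipschitz; this holds for all spaces considered in the paper.
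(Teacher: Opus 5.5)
Your proof is correct and follows the paper's argument: part 1 is exactly the paper's combination of $|u^2-v^2|=(u+v)|u-v|$ with $\dist(a(p),S)\le \dist(p,S)+\dist(p,a(p))$. For part 2, the paper instead bounds $\cost(p,S)\le 2\cost(p,A)+2\cost(a_j,S)$ and substitutes this into part 1, whereas you eliminate $\dist(p,S)$ directly using the other side of the triangle inequality; this is an equivalent and slightly cleaner route to the same AM--GM step.
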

\begin{proof}
Let $a=a(p)$. By the triangle inequality,
\[
|\dist(p,S)-\dist(a,S)|\le \dist(p,a),
\qquad
\dist(a,S)\le \dist(p,S)+\dist(p,a).
\]
Using $|u^2-v^2|=(u+v)|u-v|$, we get
\begin{align*}
\bigl|\cost(p,S)-\cost(a,S)\bigr|
&=|\dist^2(p,S)-\dist^2(a,S)| \\
&\le (2\dist(p,S)+\dist(p,a))\dist(p,a) \\
&\lesssim \sqrt{\cost(p,S)\cost(p,A)}+\cost(p,A),
\end{align*}
which proves the first part.

For the second part, suppose $p\in C_j$ and $C_j$ is close to $S$. Then
$\cost(a_j,S)\le \Delta_j\eps^{-2}$, and hence
\[
\cost(p,S)\le 2\cost(p,A)+2\cost(a_j,S)
\le 2\cost(p,A)+2\Delta_j\eps^{-2}.
\]
Substituting this into the first part gives
\[
\bigl|\cost(p,S)-\cost(a,S)\bigr|
\lesssim
\eps^{-1}\sqrt{\Delta_j\cost(p,A)}+\cost(p,A)
\lesssim
\eps^{-1}\bigl(\Delta_j+\cost(p,A)\bigr),
\]
where we used $\sqrt{uv}\le (u+v)/2$ and $\Delta_p=\Delta_j$.
\end{proof}

\begin{claim}
\label{clm:remainder-bernstein-inputs}
For fixed $S\in\mathcal S$, the random variables $X_i=X_i(S,\Omega)$ satisfy
\[
\sum_{i=1}^m \Var(X_i)\lesssim \frac{1}{m},
\qquad
|X_i-\E X_i|\lesssim \frac{1}{\eps m}.
\]
\end{claim}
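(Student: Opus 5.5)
The approach is to write each $X_i$ as a ratio of a per-point quantity to the sampling probability, and then bound it pointwise using Claim~\ref{clm:triangleineqfest} together with the explicit lower bounds on $\mu(p)$. Define
\[
f(p):=\frac{\bigl(\cost(p,S)-\cost(a(p),S)\bigr)\mathbf{1}[p\in P_C(S)]}{\cost(P,S)}.
\]
Since $w(q_i)=1/(m\mu(q_i))$, we have $X_i=f(q_i)/(m\mu(q_i))$. The two ingredients are then
\[
\Var(X_i)\le \E[X_i^2]=\frac{1}{m^2}\sum_p \frac{f(p)^2}{\mu(p)}
\qquad\text{and}\qquad
|X_i|\le \frac{1}{m}\max_p \frac{|f(p)|}{\mu(p)}.
\]
Summing the first over $m$ terms gives $\sum_i\Var(X_i)\le \frac1m\sum_p f(p)^2/\mu(p)$. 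The bound on $|X_i-\E X_i|$ follows from the one on $|X_i|$, because $|\E X_i|\le\max|X_i|$ costs only a factor of $2$.

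For the maximum deviation, take a close point $p\in C_j$. Part 2 of Claim~\ref{clm:triangleineqfest} gives $|f(p)|\lesssim \eps^{-1}(\Delta_p+\cost(p,A))/\cost(P,S)$. The third and fourth terms of $\mu$ give $\mu(p)\ge \frac14(\cost(p,A)+\Delta_p)/\cost(P,A)$. Hence
\[
\frac{|f(p)|}{\mu(p)}\lesssim \eps^{-1}\,\frac{\cost(P,A)}{\cost(P,S)}\lesssim \eps^{-1},
\]
where the last step uses that $A$ is an $O(1)$-approximation, so $\cost(P,A)\lesssim\cost(P,S)$. This yields $|X_i-\E X_i|\lesssim 1/(\eps m)$.

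The variance is the main obstacle. The crude bound $\sum_p f(p)^2/\mu(p)\le \max_p(|f(p)|/\mu(p))\cdot\sum_p|f(p)|$ loses a factor $\eps^{-1}$, so I would use part 1 of Claim~\ref{clm:triangleineqfest} instead. Squaring it gives
\[
f(p)^2\cost(P,S)^2\lesssim \cost(p,S)\cost(p,A)+\cost(p,A)^2.
\]
The second summand, divided by $\mu(p)\gtrsim\cost(p,A)/\cost(P,A)$, contributes $\lesssim\cost(p,A)\cost(P,A)$. Summed over $p$ this is $\cost(P,A)^2\lesssim\cost(P,S)^2$, which is fine.

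The first summand, divided by the same lower bound on $\mu$, contributes $\cost(p,S)\cost(P,A)$. Summed over $p$ this gives at most $\cost(P,S)\cost(P,A)\lesssim\cost(P,S)^2$. The case $\cost(p,A)=0$ is harmless, since then $f(p)=0$ because $p=a(p)$ up to distance zero. Altogether $\sum_p f(p)^2/\mu(p)\lesssim 1$, and therefore $\sum_i\Var(X_i)\lesssim 1/m$.

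I expect the step needing most care to be this cross term $\sqrt{\cost(p,S)\cost(p,A)}$. One must avoid using the close-cluster bound there, because that would introduce $\eps^{-2}$, and instead pair the $\cost(p,S)$ factor with the global sum $\sum_p\cost(p,S)=\cost(P,S)$. The remaining verifications are routine constant chasing.
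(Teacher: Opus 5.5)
Your proposal is correct and follows the paper's argument. The variance bound uses part~1 of Claim~\ref{clm:triangleineqfest} with $\mu(p)\gtrsim\cost(p,A)/\cost(P,A)$, and the range bound uses part~2 with the third and fourth terms of $\mu$; both steps are closed off by $\cost(P,A)\lesssim\cost(P,S)$, and your treatment of points with $\cost(p,A)=0$ (where $f(p)=0$) makes explicit a detail the paper leaves implicit.
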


\begin{proof}
By definition,
\[
\Var(X_i)\le \E[X_i^2]
=
\frac{1}{m^2\cost(P,S)^2}
\sum_{p\in P_C(S)}
\frac{\bigl(\cost(p,S)-\cost(a(p),S)\bigr)^2}{\mu(p)}.
\]
Using the first part of Claim~\ref{clm:triangleineqfest} and the sensitivity lower bound
$\mu(p)\gtrsim \cost(p,A)/\cost(P,A)$, we obtain
\begin{align*}
\Var(X_i)
&\lesssim
\frac{1}{m^2\cost(P,S)^2}
\sum_{p\in P_C(S)}
\frac{\cost(p,S)\cost(p,A)+\cost(p,A)^2}{\mu(p)} \\
&\lesssim
\frac{\cost(P,A)}{m^2\cost(P,S)^2}
\bigl(\cost(P,S)+\cost(P,A)\bigr)
\lesssim
\frac{1}{m^2},
\end{align*}
where the last step uses that $A$ is an $O(1)$-approximate solution, so
$\cost(P,A)\lesssim \cost(P,S)$. Summing over $i$ gives
$\sum_i\Var(X_i)\lesssim 1/m$.

For the range bound, condition on $q_i=p\in P_C(S)$. By the second part of
Claim~\ref{clm:triangleineqfest},
\[
|X_i|
\le
\frac{1}{m\mu(p)}
\cdot
\frac{\eps^{-1}(\Delta_p+\cost(p,A))}{\cost(P,S)}.
\]
Using the sensitivity lower bounds
$\mu(p)\gtrsim \Delta_p/\cost(P,A)$ and
$\mu(p)\gtrsim \cost(p,A)/\cost(P,A)$, together with
$\cost(P,A)\lesssim \cost(P,S)$, gives
$|X_i|\lesssim(\eps m)^{-1}$. Therefore
$|X_i-\E X_i|\lesssim(\eps m)^{-1}$ as well.
\end{proof}

Now we finish the proof of the claim.

\begin{proof}[Proof of Claim~\ref{eq:remainder}]
By Claim~\ref{clm:remainder-bernstein-inputs}, Bernstein's inequality gives
\[
\Pr\left[
\left|\sum_{i=1}^m(X_i-\E X_i)\right|>\eps/4
\right]
\le
\exp\bigl(-\Omega(\eps^2m)\bigr).
\]
By the choice of
\[
m=C\eps^{-2}\bigl(k\log(k/\delta)+\log N\bigr),
\]
and for a sufficiently large constant $C$, this failure probability is at most
$\delta/(4N)$ for any fixed $S\in\mathcal S$. Taking a union bound over all
$N$ solutions in $\mathcal S$ gives, with probability at least $1-\delta/4$,
\[
\sup_{S\in\mathcal S}
\left|
\frac{A(S)-C(S,\Omega)}{\cost(P,S)}
\right|
\le \eps/4. \qedhere
\]
\end{proof}

\subsection{Event~$E$ Occurs With High Probability}\label{subsec:event-e-whp}

We prove Claim~\ref{clm:evente}. Fix a cluster $C_j$. We prove the two conditions
in Event~$E$ separately and then union bound over all clusters.

\paragraph{Cluster weight.}
Define
\[
Y_i:=w(q_i)\mathbf{1}[q_i\in C_j].
\]
Then $\sum_iY_i$ is the total coreset weight in $C_j$, and
$\E[\sum_iY_i]=|C_j|$. Since $\mu(p)\gtrsim 1/(k|C_j|)$ for $p\in C_j$, we have
\[
0\le Y_i\lesssim \frac{k|C_j|}{m},
\qquad
\sum_{i=1}^m\Var(Y_i)\lesssim \frac{k|C_j|^2}{m}.
\]
Therefore, Bernstein's inequality implies
\[
\Pr\left[
\left|\sum_iY_i-|C_j|\right|>\eps |C_j|
\right]
\le 2\exp\left(-\Omega(\eps^2m/k)\right).
\]

\paragraph{Cluster $A$-cost.}
Define
\[
Z_i:=w(q_i)\cost(q_i,A)\mathbf{1}[q_i\in C_j].
\]
Then $\sum_iZ_i$ is the sampled $A$-cost in $C_j$, and
$\E[\sum_iZ_i]=\cost(C_j,A)$. Assuming $\cost(C_j,A)>0$, the sensitivity lower
bound $\mu(p)\gtrsim \cost(p,A)/(k\cost(C_j,A))$ gives
\[
0\le Z_i\lesssim \frac{k\cost(C_j,A)}{m},
\qquad
\sum_{i=1}^m\Var(Z_i)\lesssim \frac{k\cost(C_j,A)^2}{m}.
\]
Therefore,
\[
\Pr\left[
\left|\sum_iZ_i-\cost(C_j,A)\right|>\eps\cost(C_j,A)
\right]
\le 2\exp\left(-\Omega(\eps^2m/k)\right).
\]
If $\cost(C_j,A)=0$, then all points in $C_j$ have zero $A$-cost, so the second
condition holds trivially.

\paragraph{Union bound.}
For a sufficiently large constant in the sample size, each failure probability is
at most $\delta/(8k)$. A union bound over the two conditions and all $k$ clusters
gives $\Pr[E]\ge 1-\delta/4$.
\subsection{Cost of Far Clusters is Preserved}

We prove Lemma~\ref{lem:goal-far}. By Claim~\ref{clm:evente}, Event~$E$ holds with probability at least $1-\delta/4$. We condition on Event~$E$ and prove the desired bound deterministically.

Fix $S\in\mathcal S$. For a cluster $C_j$ that is far from $S$, write
$A_j:=\cost(a_j,S).$ The key point is that $|C_j|A_j$ is a good proxy for the true cost of $C_j$, while the corresponding sampled quantity is controlled by Event~$E$.

\begin{claim}
\label{clm:far-cluster}
For every cluster $C_j$ that is far from $S$,
\[
\left|\cost(C_j,S)-\costom(C_j,S)\right|
\lesssim
\eps |C_j|A_j.
\]
\end{claim}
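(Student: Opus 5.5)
The plan is to compare each point's cost to $S$ with the proxy $A_j=\cost(a_j,S)$, both for the true cluster and for its sampled version, and to use Event~$E$ to transfer the proxy. For $p\in C_j$, Claim~\ref{clm:triangleineqfest} (first part, applied with the roles of $p$ and $a_j$ in mind) gives
\[
\bigl|\cost(p,S)-A_j\bigr|\;\le\;(2\dist(a_j,S)+\dist(p,a_j))\dist(p,a_j)\;\lesssim\;\sqrt{A_j\cost(p,A)}+\cost(p,A).
\]
Since the cluster is far, $A_j>\Delta_j\eps^{-2}$, i.e.\ $\Delta_j<\eps^2A_j$. This is the fact to exploit.

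\textbf{True cost.} Summing over $p\in C_j$ and using Cauchy--Schwarz,
\[
\sum_{p\in C_j}\sqrt{A_j\cost(p,A)}\le \sqrt{A_j}\sqrt{|C_j|\cost(C_j,A)}=|C_j|\sqrt{A_j\Delta_j}\le \eps|C_j|A_j .
\]
Also $\cost(C_j,A)=|C_j|\Delta_j\le\eps^2|C_j|A_j$. Hence
\[
|\cost(C_j,S)-|C_j|A_j|\lesssim \eps|C_j|A_j .
\]

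\textbf{Coreset cost.} Repeat the computation with the weighted points $q\in\Omega\cap C_j$. Let $W_j=\sum w(q)$ and $K_j=\sum w(q)\cost(q,A)$ over these points. Event~$E$ gives $W_j=(1\pm\eps)|C_j|$ and $K_j=(1\pm\eps)\cost(C_j,A)\le 2|C_j|\Delta_j$. Weighted Cauchy--Schwarz then yields
\[
\sum w(q)\sqrt{A_j\cost(q,A)}\le\sqrt{A_jW_jK_j}\lesssim |C_j|\sqrt{A_j\Delta_j}\le\eps|C_j|A_j,
\]
and $K_j\lesssim\eps^2|C_j|A_j$. Therefore
\[
|\costom(C_j,S)-W_jA_j|\lesssim\eps|C_j|A_j .
\]
Finally, $|W_jA_j-|C_j|A_j|\le\eps|C_j|A_j$, again by Event~$E$.

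Combining the three bounds by the triangle inequality gives the claim. The main point to get right is the Cauchy--Schwarz step: a naive AM-GM would lose the factor $\eps$. We must instead use $\Delta_j\le\eps^2A_j$ inside the square root, which produces exactly one factor of $\eps$. Lemma~\ref{lem:goal-far} then follows by summing the claim over far clusters and invoking Claim~\ref{clm:useful-lb}.
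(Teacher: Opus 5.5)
Your proposal is correct and follows the paper's proof essentially step for step: the per-point comparison of $\cost(p,S)$ with the proxy $A_j$, Cauchy--Schwarz combined with $\Delta_j\le\eps^2A_j$ for the true cost, the same argument on the weighted sample via Event~$E$ (your $K_j$ is the paper's $W_j\Phi_j$), and a final triangle inequality using $\bigl|W_j-|C_j|\bigr|A_j\le\eps|C_j|A_j$. Your explicit per-point bound $(2\dist(a_j,S)+\dist(p,a_j))\dist(p,a_j)$ is slightly more careful than the paper's citation of Claim~\ref{clm:triangleineqfest}(i), which as stated has $\cost(p,S)$ rather than $A_j$ under the square root.
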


\begin{proof}
By Claim~\ref{clm:triangleineqfest}(i), for every $p\in C_j$,
\[
|\cost(p,S)-A_j|
\lesssim
\sqrt{A_j\cost(p,A)}+\cost(p,A).
\]
Summing over $p\in C_j$ and applying Cauchy--Schwarz gives
\[
\left|\cost(C_j,S)-|C_j|A_j\right|
\lesssim
|C_j|\left(\sqrt{A_j\Delta_j}+\Delta_j\right).
\]
Since $C_j$ is far from $S$, we have $\Delta_j\le \eps^2 A_j$, and therefore
\[
\left|\cost(C_j,S)-|C_j|A_j\right|
\lesssim
\eps |C_j|A_j.
\]

Now let
\[
W_j:=\sum_{q\in C_j\cap\Omega}w(q),
\qquad
\Phi_j:=\frac{1}{W_j}\sum_{q\in C_j\cap\Omega}w(q)\cost(q,A).
\]
Applying the same argument to the weighted sampled points gives
\[
\left|\costom(C_j,S)-W_jA_j\right|
\lesssim
W_j\left(\sqrt{A_j\Phi_j}+\Phi_j\right).
\]
By Event~$E$, $W_j=(1\pm\eps)|C_j|$ and $\Phi_j=(1\pm O(\eps))\Delta_j$. Since $\Delta_j\le \eps^2A_j$, this implies
\[
\left|\costom(C_j,S)-W_jA_j\right|
\lesssim
\eps |C_j|A_j.
\]
Event~$E$ also gives
\[
|W_jA_j-|C_j|A_j|\le \eps |C_j|A_j.
\]
The claim follows by the triangle inequality.
\end{proof}

Summing Claim~\ref{clm:far-cluster} over all clusters far from $S$ gives
\[
\left|\cost(P_F(S),S)-\costom(P_F(S),S)\right|
\lesssim
\eps\sum_{j:\,C_j\text{ far from }S}|C_j|\cost(a_j,S)
\le
\eps\sum_{j=1}^k |C_j|\cost(a_j,S).
\]
By Claim~\ref{clm:useful-lb}, the last quantity is at most $O(\eps)\cost(P,S)$. Rescaling constants gives Lemma~\ref{lem:goal-far}.

\subsection{A note on composability}
Composability refers to the property that the union of strong coresets for disjoint data sets are a strong coreset for the union. 
This property follows immediately from the coreset definition and the $k$-means cost function.

Approximation preserving coresets are not composable by definition. Nevertheless, since we are obtaining them via sensitivity sampling, we can argue composability.
To begin, we notice that oversampling yields strong coresets, which are composable and which we can then downsample to obtain an APC. If the oversampling is too expensive, as might the case in finite metrics, then we observe that the number of solutions induced by merging $t$ APCs each of size at most $m$, increases from ${\binom{m}{|S|}}$, where $|S|$ is the size of the witness sets to ${t\cdot \binom{m}{|S|}}$. Thus, the oversampling only increases by a factor $\log t$.

\newpage

\section{Further Experimental Evaluation}
\label{sec:furtherexperiments}

First, we present detailed cost-ratio plots across different values of $k$ for each dataset. Our plots indicate that, across all three algorithms, the cost-ratio curve as a function of $m/k$ is essentially invariant to the choice of $k$. Consequently, we aggregate results by fixing $m/k$ and averaging the cost ratio over all tested values of $k$ and corresponding $m$, and study the resulting curve as a function of $m/k$.

\begin{figure}[htbp]
  \centering
  \includegraphics[scale=0.4]{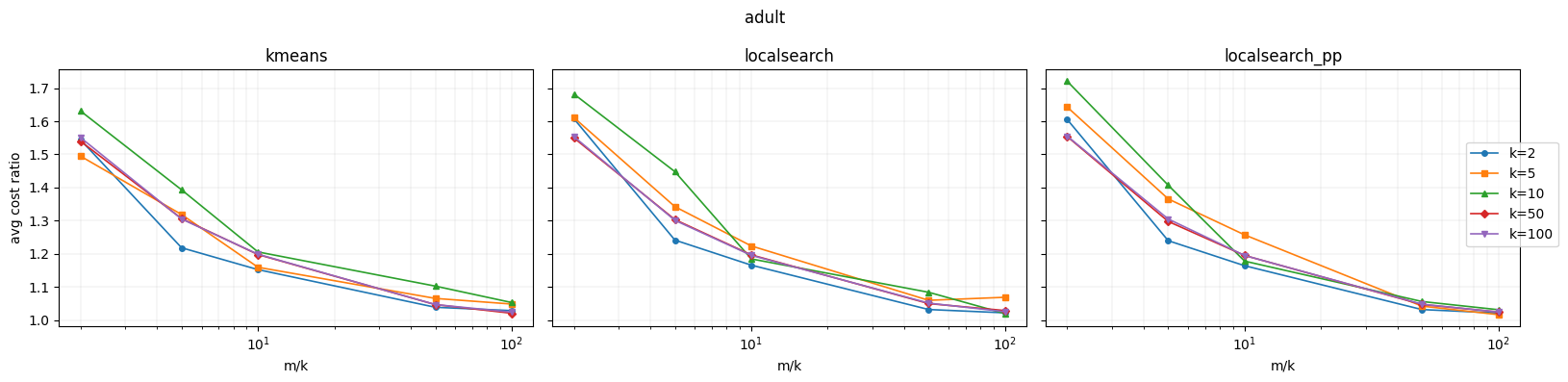}
  \caption{Adult dataset}
\end{figure}

\begin{figure}[htbp]
  \centering
  
\includegraphics[scale=0.4]{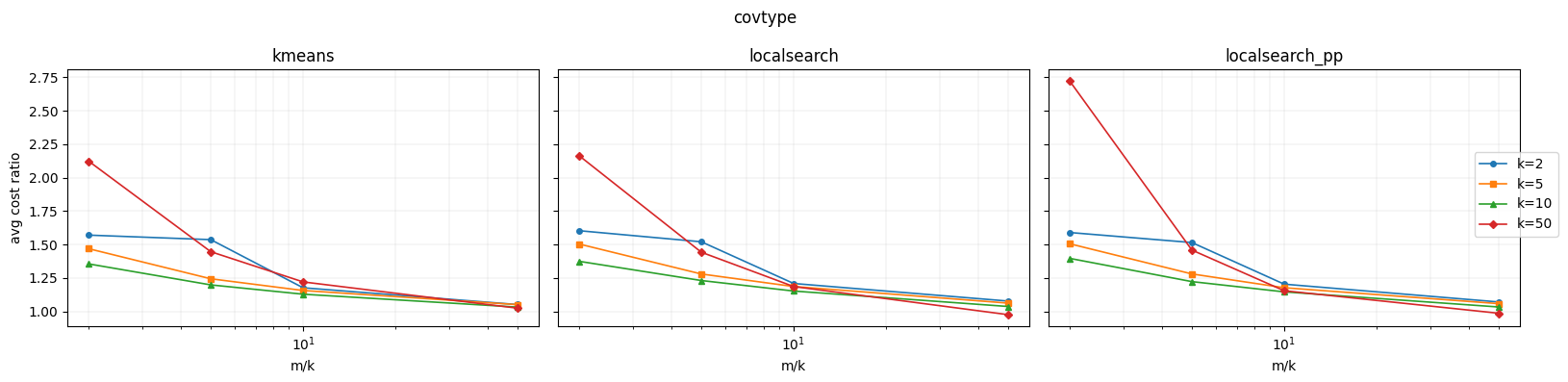}
  \caption{Covertype dataset}
\end{figure}

\begin{figure}[htbp]
  \centering
  \includegraphics[scale=0.4]{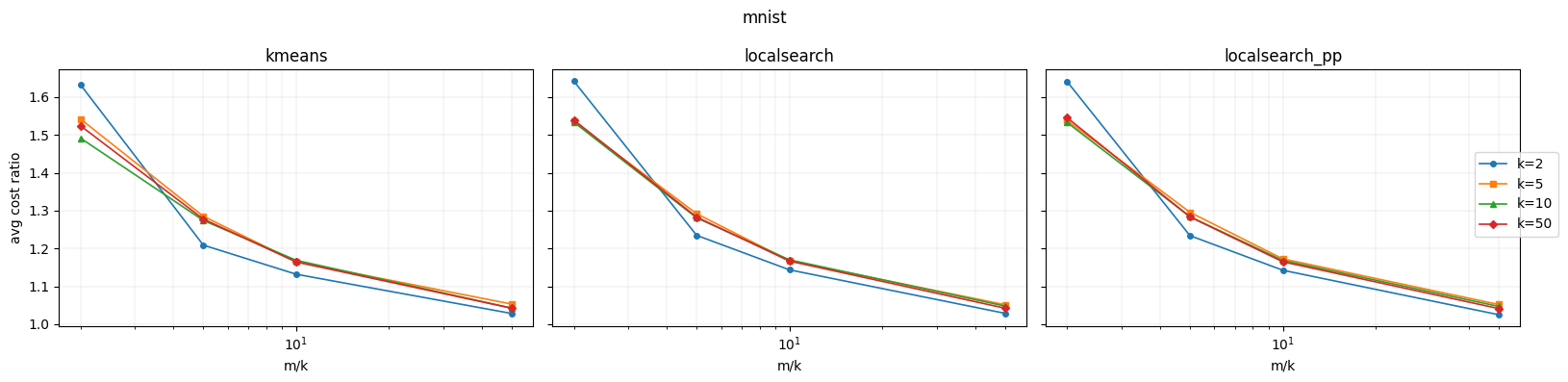}
  \caption{MNIST dataset}
\end{figure}

Second, we present aggregate plots for MNIST and Covertype (Figure \ref{fig:threeplots}), obtained by averaging the cost ratio over all runs with the same $m/k$.

\begin{figure}[htbp]
  \centering

  \begin{subfigure}{\linewidth}
    \centering
    \includegraphics[scale=0.4]{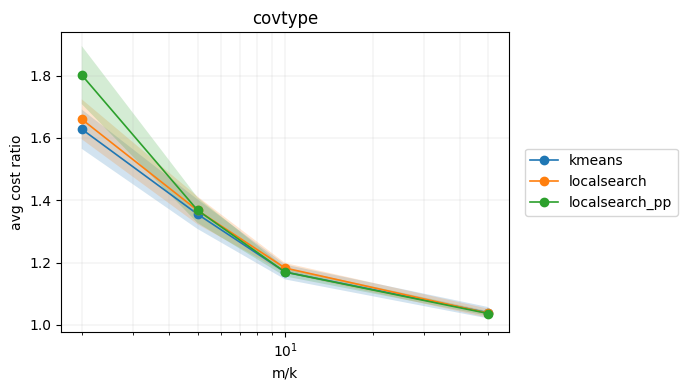}
    \caption{CoverType dataset}
  \end{subfigure}

  \vspace{0.6em}

  \begin{subfigure}{\linewidth}
    \centering
    \includegraphics[scale=0.4]{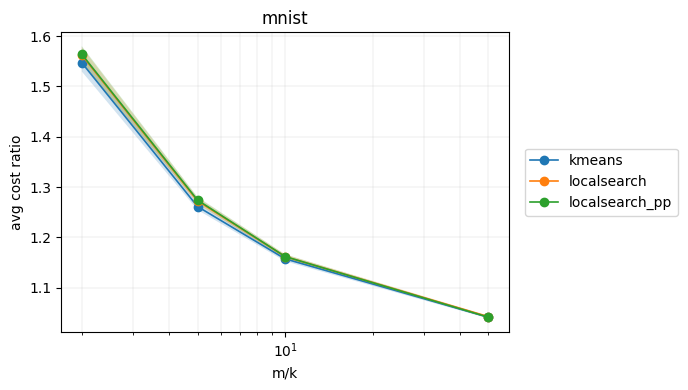}
    \caption{MNIST dataset}
  \end{subfigure}

  \caption{Average cost ratio vs.\ $m/k$ ( MNIST, Covertype) for each algorithm: costs are evaluated on the full dataset and averaged over all $k$ and trials.}
  \label{fig:threeplots}
\end{figure}

Finally, we report tables (Table \ref{tab:adult-costs}, Table \ref{tab:mnist-costs}, Table \ref{tab:covtype-costs}), of solution costs obtained by running the three algorithms on the full datasets. This is included to illustrate the universality of our APCs: the guarantees hold across algorithms, even when their baseline solution qualities differ.

\begin{table}[ht]
\centering
\begin{tabular}{llll}
\toprule
$(k,m)$ & \textsc{K-means++} & \textsc{LOCAL-Search} & \textsc{LOCAL-Search++} \\
\midrule
(2,10) & 5.646e+05 & 5.749e+05 & 5.749e+05 \\
(5,50) & 4.384e+05 & 4.454e+05 & 4.430e+05 \\
(10,100) & 3.536e+05 & 3.438e+05 & 3.360e+05 \\
(50,500) & 2.477e+05 & 2.459e+05 & 2.453e+05 \\
\bottomrule
\end{tabular}

\caption{Cost on the scaled dataset (objective evaluated on the full dataset).}
\label{tab:adult-costs}
\end{table}

\begin{table}[ht]
\centering
\begin{tabular}{llll}
\toprule
$(k,m)$ & \textsc{K-means++} & \textsc{LOCAL-Search} & \textsc{LOCAL-Search++} \\
\midrule
(2,10) & 4.183e+06 & 4.281e+06 & 4.282e+06 \\
(5,50) & 3.561e+06 & 3.578e+06 & 3.587e+06 \\
(10,100) & 3.236e+06 & 3.231e+06 & 3.220e+06 \\
(50,500) & 2.530e+06 & 2.527e+06 & 2.522e+06 \\
\bottomrule
\end{tabular}

\caption{Cost on the scaled dataset (objective evaluated on the full dataset).}
\label{tab:mnist-costs}
\end{table}

\begin{table}[ht]
\centering
\begin{tabular}{llll}
\toprule
$(k,m)$ & \textsc{K-means++} & \textsc{LOCAL-Search} & \textsc{LOCAL-Search++} \\
\midrule
(2,10) & 4.638e+07 & 4.495e+07 & 4.495e+07 \\
(5,50) & 3.181e+07 & 3.205e+07 & 3.182e+07 \\
(10,100) & 2.723e+07 & 2.760e+07 & 2.747e+07 \\
(50,500) & 8.070e+06 & 7.750e+06 & 5.793e+06 \\
\bottomrule
\end{tabular}

\caption{Cost on the scaled dataset (objective evaluated on the full dataset).}
\label{tab:covtype-costs}
\end{table}

\end{document}